\newtheoremstyle{definition}{12pt}{12pt}{\rm}{}{\sffamily}{ }{ }{}
\theoremstyle{definition}
\newtheorem{definition}{\sc Definition}
\newtheoremstyle{corollary}{12pt}{12pt}{\rm}{}{\sffamily}{ }{ }{}
\theoremstyle{corollary}
\newtheoremstyle{remark}{12pt}{12pt}{\rm}{}{\sffamily}{ }{ }{}
\theoremstyle{remark}
\newtheorem{remark}{\sc Remark}
\newtheoremstyle{lemma}{12pt}{12pt}{\rm}{}{\sffamily}{ }{ }{}
\theoremstyle{lemma}
\newtheorem{lemma}{\sc Lemma}
\newtheoremstyle{Assumption}{12pt}{12pt}{\rm}{}{\sffamily}{ }{ }{}
\theoremstyle{Assumption}
\newtheorem{Assumption}{\sc Assumption}
\newtheoremstyle{theorem}{12pt}{12pt}{\rm}{}{\sffamily}{ }{ }{}
\theoremstyle{theorem}
\newtheorem{theorem}{\sc Theorem}
\newtheoremstyle{example}{12pt}{12pt}{\rm}{}{\sffamily}{ }{ }{}
\theoremstyle{example}
\newtheorem{example}{\sc Example}
\newtheorem*{theorem*}{\sc Theorem}
\def\ci{\perp\!\!\!\perp}
\def\nci{\not\perp\!\!\!\perp}
\begin{document}

\title{Estimating causal structure using conditional DAG models}

\author{CHRIS. J. OATES$^\ast$, JIM Q. SMITH\\[4pt]
\textit{Department of Statistics, University of Warwick}, \\ \textit{Coventry, CV4 7AL UK}
\\[2pt]
{c.oates@warwick.ac.uk, j.q.smith@warwick.ac.uk} \\[8pt]
SACH MUKHERJEE\\[4pt]
\textit{MRC Biostatistics Unit and School of Clinical Medicine}, \\ \textit{University of Cambridge, Cambridge, CB2 0SR UK}
\\[2pt]
{sach@mrc-bsu.cam.ac.uk}}

\markboth%
{C. J. Oates, J. Q. Smith and S. Mukherjee}
{Causal inference using conditional DAG models}

\maketitle

\begin{abstract}
{This paper considers  inference of causal structure in a class of graphical models called ``conditional DAGs''.
These are directed acyclic graph (DAG) models with two kinds of variables, primary and secondary. The secondary variables are used to aid in estimation of causal relationships between the primary variables. 
We give causal semantics for this model class and prove that, under certain assumptions, the direction of causal influence  is  identifiable from the joint observational distribution of the primary and secondary variables.
A score-based  approach is developed for estimation of causal structure using these models and consistency results are established.
Empirical results demonstrate  gains compared with formulations  that treat all variables on an equal footing, or that  ignore secondary variables.
The methodology is motivated by applications in molecular biology and is illustrated here using simulated data and in an analysis of proteomic data from the Cancer Genome Atlas.
}
{graphical models, causal inference, directed acyclic graphs, instrumental variables}
\end{abstract}

\section{Introduction}

This paper considers estimation of causal structure among a set of ``primary" variables $(Y_i)_{i \in V}$, using additional ``secondary" variables $(X_i)_{i \in W}$ to aid in estimation. 
The primary variables are those of direct scientific interest while the secondary variables are variables that are known to influence the primary variables, but whose mutual relationships are not of immediate interest and possibly not amenable to inference using the available data.
As we discuss further below, the primary/secondary distinction is common in biostatistical applications and is often dealt with in an {\it ad hoc} manner, for example by leaving some relationships or edges implicit in causal diagrams. Our aim is to define a class of graphical models for this setting and to clarify the conditions under which secondary variables can aid in causal inference. We focus on structural inference in the sense of estimation of the presence or absence of edges in the causal graph rather than estimation of quantitative causal effects.

The fact that primary variables of direct interest are often part of a wider context, including additional secondary variables, presents challenges for graphical modelling and causal inference, since in general the secondary variables will not be independent and simply marginalising may introduce spurious dependencies \citep{Evans2}.
Motivated by this observation, we define ``conditional'' DAG (CDAG) models and discuss their semantics.
Nodes in a CDAG are of two kinds corresponding to primary and secondary variables, and as  detailed below the semantics of CDAGs allow causal inferences to be made about the primary variables $(Y_i)_{i \in V}$ whilst accounting for the secondary variables $(X_i)_{i \in W}$.
To limit scope, we focus on the  setting where each primary variable has a known cause among the secondary variables, specifically we suppose there is a  bijection $\phi : V \rightarrow W$, between the primary and secondary index sets $V$ and $W$, such that for each $i \in V$ a direct causal dependency $X_i \rightarrow Y_{\phi(i)}$ exists. (Throughout, we use the term ``direct" in the sense of \citet{Pearl} and note that the causal influence need not be {\it physically} direct, but rather may permit non-confounding intermediate variables).
Under  explicit assumptions we show that such secondary variables can aid in causal inference for the primary variables, because known causal relationships between secondary and primary variables render ``primary-to-primary" causal links of the form $Y_i \rightarrow Y_j$ identifiable from joint data on primary and secondary variables.
We put forward score-based estimators of CDAG structure that we show are asymptotically consistent under certain conditions; importantly, independence assumptions on the secondary variables are not needed. 

This work was motivated by current efforts in molecular biology aimed at exploiting high-throughput  biomolecular data to better understand causal molecular mechanisms, such as those involved in gene regulation or protein signaling.
A notable feature of molecular biology is the fact that some causal  links are relatively clearly defined by known sequence specificity. 
For example DNA sequence variation has a causal influence on the level of corresponding mRNA; mRNAs have a causal influence on corresponding total protein levels; and total protein levels have a causal influence on levels of post-translationally modified protein.
This means that in a study involving a certain molecular variable (a protein, say), a subset of the causal influences upon it may be clear at the outset (e.g. the corresponding mRNA) and typically it is the unknown influences that are the subject of the study. Then, it is natural to ask whether accounting for the known influences can aid in identification of the unknown influences. For example,  if interest focuses on causal relationships between proteins, known mRNA-protein links could be exploited to aid in causal identification at the protein-protein level. 
 We show below an  example with certain (post-translationally modified) proteins as the primary variables and total protein levels as secondary variables.

Our development of the CDAG can be considered dual to the acyclic directed mixed graphs (ADMGs) developed by \cite{Evans2}, in the sense that we investigate conditioning as an alternative to marginalisation.
In this respect our work mirrors recently developed undirected graphical models called conditional graphical models \citep[CGMs; ][]{Li,Cai}
In CGMs, Gaussian random variables $(Y_k)_{k \in V}$ satisfy 
\begin{eqnarray}
Y_i \ci Y_j | (Y_k)_{k \in V \setminus \{i,j\}}, (X_k)_{k \in W} \text{ if and only if } (i,j) \notin G
\end{eqnarray}
where $G$ is an undirected acyclic graph and $(X_k)_{k \in W}$ are auxiliary random variables that are conditioned upon.
CGMs have recently been applied to gene expression data $(Y_i)_{i \in V}$ with the $(X_i)_{i \in W}$ corresponding to single nucleotide polymorphisms (SNPs) \citep{Zhang2} and with the secondary variables $(X_i)_{i \in W}$ corresponding to expression qualitative trait loci (e-QTL) data \citep{Logsdon,Yin,Cai2}, the latter being recently extended to jointly estimate several such graphical models in \cite{Chun}.
Also in the context of undirected graphs, \cite{vanW} recently considered encoding a bijection between DNA copy number and mRNA expression levels into inference.
Our work  complements these efforts using directed models that are arguably  more appropriate for  causal inference \citep{Lauritzen2}. CDAGs are also related to instrumental variables and Mendelian randomisation approaches \citep{Didilez} that we discuss below (Section \ref{section_causal_CDAG}).

The class of CDAGs shares some  similarity with the influence diagrams (IDs) introduced by \cite{Dawid2} as an extension of DAGs that distinguish between variable nodes and decision nodes.
This generalised the augmented DAGs of \cite{Spirtes,Lauritzen,Pearl} in which each variable node is associated with a decision node that represents an intervention on the corresponding variable.
However, the semantics of IDs are not well suited to the scientific contexts that we consider, where  secondary nodes represent variables to be observed, not the outcomes of decisions.
The notion of a non-atomic intervention \citep{Pearl3}, where many variables are intervened upon simultaneously, shares similarity with CDAGs in the sense that the secondary variables are in general non-independent.
However again the semantics differ, since our secondary nodes represent random variables rather than interventions. 
In a different direction, \cite{Neto} recently observed that the use of e-QTL data $(X_i)_{i \in W}$ can help to identify causal relationships among gene expression levels $(Y_i)_{i \in V}$.
However, \cite{Neto} require independence of the  $(X_i)_{i \in W}$;
this is too restrictive for general  settings, including in molecular biology, since the secondary variables will typically themselves be subject to regulation and far from independent.

This paper begins in Sec. \ref{methods} by defining CDAGs and discussing  identifiability of their structure from observational data on primary and secondary variables. 
Sufficient conditions are then given for  consistent estimation of CDAG structure along with an algorithm based on integer linear programming. The methodology is illustrated in Section \ref{results} on simulated data, including datasets that violate CDAG assumptions, and on proteomic data from cancer patient samples, the latter from the Cancer Genome Atlas (TCGA) ``pan-cancer" study.

\begin{figure}[t]
\centering
\begin{subfigure}[]{0.45\textwidth}
\centering
\includegraphics[width = \textwidth,clip,trim = 3cm 24cm 9cm 2.5cm]{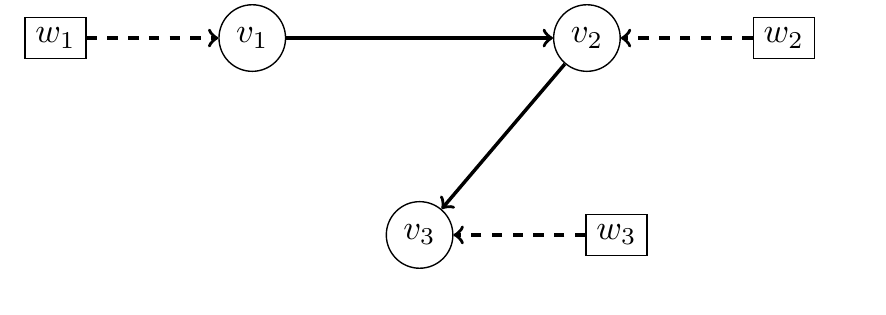}
\caption{}
\label{illustrate}
\end{subfigure} 
\begin{subfigure}[c]{0.45\textwidth}
\centering
\vspace{10pt}
\begin{tabular}{|l|c|c|c|} \hline
type & index & node & variable \\ \hline
primary & $i \in V$ & $v_i \in N(V)$ & $Y_i$ \\
secondary & $i \in W$ & $w_i \in N(W)$ & $X_i$ \\ \hline
\end{tabular}
\vspace{20pt}
\caption{}
\label{illustrate2}
\end{subfigure} 
\caption{A conditional DAG model with primary nodes $N(V) = \{v_1,v_2,v_3\}$ and secondary nodes $N(W) = \{w_1,w_2,w_3\}$. Here primary nodes represent primary random variables $(Y_i)_{i \in V}$ and solid arrows correspond to a DAG $G$ on these vertices. 
Square nodes are secondary variables $(X_i)_{i \in W}$ that, in the causal interpretation of CDAGs, represent known direct causes of the corresponding $(Y_i)_{i \in V}$ (dashed arrows represent known relationships; the random variables $(X_i)_{i \in W}$ need not be independent).
The name ``conditional" DAG refers to the fact that conditional upon $(X_i)_{i \in W}$, the solid arrows encode conditional independence relationships among the $(Y_i)_{i \in V}$.}
\end{figure}

\section{Methodology} \label{methods}

\subsection{A statistical framework for conditional DAG models} \label{defs}

Consider index sets $V$, $W$ and a bijection $\phi:V \rightarrow W$ between them.
We will distinguish between the nodes in graphs and the random variables (RVs) that they represent. 
Specifically, indices correspond to nodes in graphical models; this is signified by the notation $N(V) = \{v_1,\dots,v_p\}$ and $N(W) = \{w_1,\dots,w_p\}$.
Each node $v_i \in N(V)$ corresponds to a primary RV $Y_i$ and similarly each node $w_i \in N(W)$ corresponds to a secondary RV $X_i$.

\begin{definition}[CDAG]
A \emph{conditional DAG} (CDAG) $\overline{G}$, with primary and secondary index sets $V$, $W$ respectively and a bijection $\phi$ between them, is a DAG on the primary node set $N(V)$ with additional  directed edges from each secondary node $w_i \in N(W)$ to its corresponding primary node $v_{\phi(i)} \in N(V)$. 
\end{definition}

In other words, a CDAG $\overline{G}$ has  node set $N(V) \cup N(W)$ and an edge set that can be generated by starting with a DAG on the primary nodes $N(V)$ and  adding a directed edge from each secondary node in $N(W)$ to its corresponding primary node in $N(V)$, with the correspondence specified by the bijection $\phi$. 
An example of a CDAG is shown in Fig. \ref{illustrate}.
To further distinguish $V$ and $W$ in the graphical representation we employ circular and square vertices respectively.
In addition we use dashed lines to represent edges that are required by definition and must therefore be present in any CDAG $\overline{G}$.

Since the DAG on the primary nodes $N(V)$ is of particular interest, throughout we use $G$ to denote a DAG on $N(V)$. We use  $\mathcal{G}$ to denote the set of all possible DAGs with $|V|$ vertices.
For notational clarity, and without loss of generality, below we take the bijection to simply be the identity map $\phi(i) = i$.
The parents of node $v_i$ in a DAG $G$
are indexed by  $\text{pa}_G(i) \subseteq V \setminus \{i\}$.
Write $\text{an}_{\overline{G}}(S)$ for the ancestors of nodes $S \subseteq N(V) \cup N(W)$ in the CDAG $\overline{G}$ (which by definition includes the nodes in $S$).
For disjoint sets of nodes $A,B,C$ in an undirected graph, we say that $C$ \emph{separates} $A$ and $B$ if every path between a node in $A$ and a node in $B$ in the graph contains a node in $C$.

\begin{definition}[$c$-separation]
Consider disjoint $A,B,C \subseteq N(V) \cup N(W)$ and a CDAG $\overline{G}$.
We say that $A$ and $B$ are \emph{$c$-separated} by $C$ in $\overline{G}$, written $A \ci B | C \; [\overline{G}]$, when $C$ separates $A$ and $B$ in the undirected graph $U_4$ that is formed as follows: (i) Take the subgraph $U_1$ induced by $\text{an}_{\overline{G}}(A \cup B \cup C)$. (ii) Moralise $U_1$ obtain $U_2$ (i.e. join with an undirected edge any parents of a common child that are not already connected by a directed edge). (iii) Take the skeleton of the moralised subgraph $U_2$ to obtain $U_3$ (i.e. remove the arrowheads). (iv) Add an undirected edge between every pair of nodes in $N(W)$ to obtain $U_4$.
\end{definition}

The $c$-separation procedure is illustrated in Fig. \ref{csep}, where we show that $v_3$ is not $c$-separated from $v_1$ by the set $\{v_2\}$ in the CDAG from Fig. \ref{illustrate}.

\begin{remark}
The classical notion of $d$-separation for DAGs is equivalent to omitting step (iv) in $c$-separation. 
Notice that $v_3$ \emph{is} $d$-separated from $v_1$ by the set $\{v_2\}$ in the DAG $G$, so that we really do require this custom notion of separation for CDAGs.
\end{remark}

\begin{figure}[t]
\centering
\begin{subfigure}[t]{0.45\textwidth}
\centering
\includegraphics[width = \textwidth,clip,trim = 3cm 24.22cm 9cm 2.5cm]{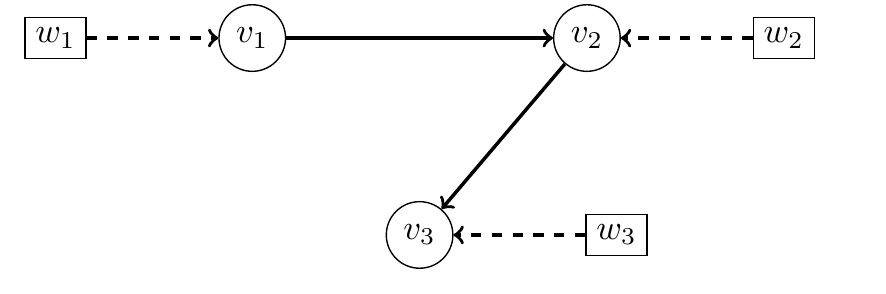}
\caption{Step (i)}
\label{U1}
\end{subfigure} 
\begin{subfigure}[t]{0.45\textwidth}
\centering
\includegraphics[width = \textwidth,clip,trim = 3cm 23.5cm 9cm 2.5cm]{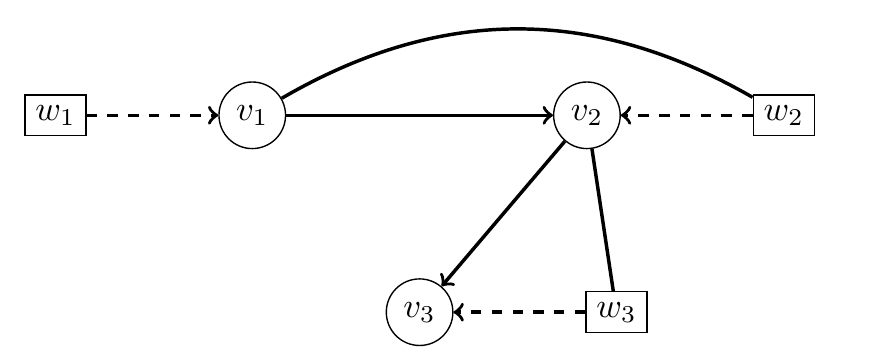}
\caption{Step (ii)}
\label{U2}
\end{subfigure} 

\begin{subfigure}[t]{0.45\textwidth}
\centering
\includegraphics[width = \textwidth,clip,trim = 3cm 23.2cm 9cm 2.5cm]{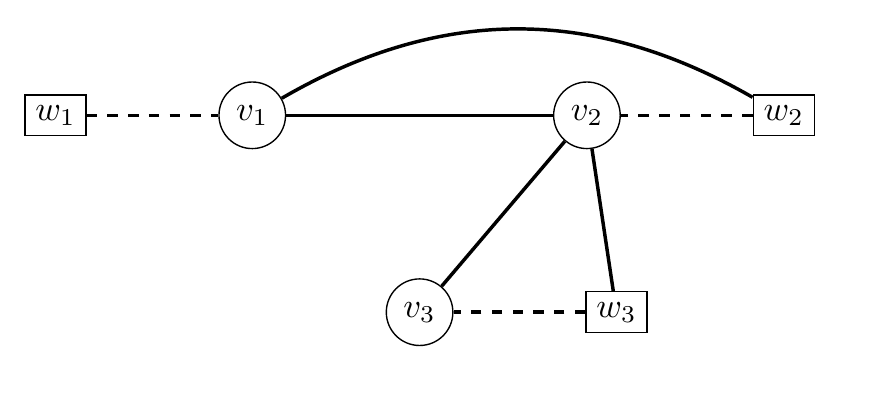}
\caption{Step (iii)}
\label{U3}
\end{subfigure} 
\begin{subfigure}[t]{0.45\textwidth}
\centering
\includegraphics[width = \textwidth,clip,trim = 3cm 22.85cm 9cm 2.5cm]{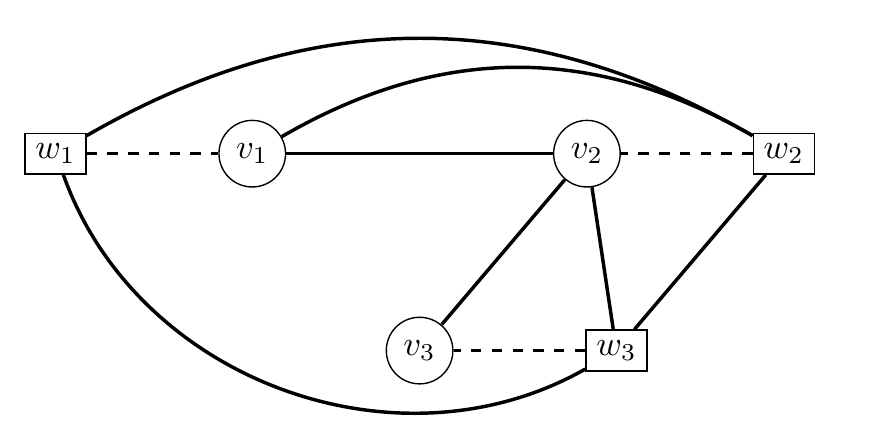}
\caption{Step (iv)}
\label{U4}
\end{subfigure} 
\caption{Illustrating $c$-separation. Here we ask whether $v_3$ and $v_1$ are $c$-separated by $\{v_2\}$ in $\overline{G}$, the CDAG shown in Fig. \ref{illustrate}.
[Step (i): Take the subgraph induced by $\text{an}_{\overline{G}}(\{v_1,v_2,v_3\})$. 
Step (ii): Moralise this subgraph (i.e. join with an undirected edge any parents of a common child that are not already connected by a directed edge).
Step (iii): Take the skeleton of the moralised subgraph (i.e. remove the arrowheads). 
Step (iv): Add an undirected edge between every pair $(w_i,w_j)$.
In the final panel (d) we ask whether there exists a path from $v_3$ to $v_1$ that does not pass through $v_2$; the answer is positive (e.g. $v_3 - w_3 - w_1 - v_1$) and hence we conclude that $v_3 \nci v_1 | v_2 [\overline{G}]$, i.e. $v_3$ and $v_1$ are not $c$-separated by $\{v_2\}$ in $\overline{G}$.]}
\label{csep}
\end{figure}

The topology (i.e. the set of edges) of a CDAG carries formal (potentially causal) semantics on the primary RVs, conditional on the secondary RVs, as specified below.
Write $T(S)$ for the collection of triples $\langle A,B|C \rangle$ where $A,B,C$ are disjoint subsets of $S$.

\begin{definition}[Independence model] \label{indep model}
The CDAG $\overline{G}$, together with $c$-separation, implies a formal \emph{independence model} \citep[p.12 of][]{Studeny}
\begin{eqnarray}
\mathcal{M}_{G} = \{ \langle A,B|C \rangle \in T(N(V) \cup N(W)) : A \ci B | C \; [\overline{G}] \}
\end{eqnarray}
where $\langle A,B|C \rangle \in \mathcal{M}_G$ carries the  interpretation that the RVs corresponding to $A$ are conditionally independent of the RVs corresponding to $B$ when given the RVs corresponding to $C$.
We will write $A \ci B | C \; [\mathcal{M}_G]$ as a shorthand for $\langle A,B|C \rangle \in \mathcal{M}_G$.
\end{definition}

\begin{remark}
An independence model $\mathcal{M}_G$ does not contain any information on the structure of the marginal distribution $\mathbb{P}^{(X_i)}$ of the secondary variables, due to the additional step (iv) in $c$-separation.
\end{remark}

\begin{lemma}[Equivalence classes] \label{equiv}
The map $G \mapsto \mathcal{M}_G$ is an injection.
\end{lemma}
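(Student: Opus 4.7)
The plan is to establish injectivity by showing that $G$ can be reconstructed uniquely from $\mathcal{M}_G$. The reconstruction proceeds in two phases, first recovering the undirected skeleton of $G$ and then its edge orientations; both rely on the fixed, known secondary edges $w_i \to v_i$ acting as built-in instruments.

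For the skeleton, I would prove: $v_i$ and $v_j$ are adjacent in $G$ if and only if $\langle v_i, v_j | C \rangle \notin \mathcal{M}_G$ for every $C \subseteq (N(V) \cup N(W)) \setminus \{v_i, v_j\}$. The forward direction is immediate, since a directed edge between $v_i$ and $v_j$ in $G$ survives as an undirected edge in $U_4$ for every such $C$, precluding separation. For the converse, given $v_i, v_j$ non-adjacent in $G$, I would exhibit a $c$-separator explicitly. Take any $d$-separating set $S \subseteq N(V) \setminus \{v_i, v_j\}$ for $v_i, v_j$ in the pure DAG $G$ (for instance, $\text{pa}_G(v_i)$ when $v_j \notin \text{an}_{\overline{G}}(v_i)$, and $\text{pa}_G(v_j)$ otherwise); then $C = S \cup (N(W) \setminus \{w_i, w_j\}) \cup \{w_i\}$ suffices, because conditioning on the secondary nodes blocks every path routed through the complete $w$--$w$ subgraph of step~(iv), while $S$ handles the purely primary paths.

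For orientation, the key observation is that every primary node $v_j$ has a unique known secondary parent $w_j$, and $w_j$ has no outgoing edges in $\overline{G}$ other than to $v_j$. Consequently, for any skeleton edge between $v_i$ and $v_j$, the triple $(v_i, v_j, w_j)$ is unshielded in $\overline{G}$ (as $v_i$ and $w_j$ are never adjacent there), and $v_i \to v_j$ in $G$ if and only if this triple is a \emph{collider} at $v_j$. Colliders are detectable from $\mathcal{M}_G$ by the standard rule adapted to $c$-separation: the triple is a collider if and only if there exists $C$ with $v_j \notin C$ for which $\langle v_i, w_j | C \rangle \in \mathcal{M}_G$, yet $\langle v_i, w_j | C \cup \{v_j\} \rangle \notin \mathcal{M}_G$. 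Intuitively, if $v_i \to v_j$ then conditioning on the collider $v_j$ introduces the moralisation edge $v_i$--$w_j$ and opens the path; if instead $v_j \to v_i$ then conditioning on $v_j$ blocks the chain from $w_j$ to $v_i$ through $v_j$. Iterating over all skeleton edges recovers every orientation.

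The main technical obstacle is the extra connectivity introduced by step~(iv), which creates spurious primary-to-primary paths through the complete subgraph on $N(W)$. Such paths are manageable because every one of them must traverse some secondary node, and secondary nodes may be freely placed in conditioning sets without disturbing any other part of the argument. A conceptually cleaner route is to observe that $c$-separation in $\overline{G}$ coincides with $d$-separation in the augmented DAG obtained from $\overline{G}$ by adjoining a single hidden common parent of all $w_i$; under this reformulation both the skeleton test and the collider test become standard DAG manipulations, and injectivity of $G \mapsto \mathcal{M}_G$ follows.
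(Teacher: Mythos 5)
Your argument is correct, but it takes a genuinely different route from the paper's. The paper never reconstructs $G$: given two DAGs that differ on an edge $v_i \rightarrow v_j$, it exhibits three $c$-separation statements built around the instruments $w_i,w_j$ (of the form $w_i \nci v_j \mid w_j,\cdot$ and $w_j \ci v_i \mid w_i$) and shows by a two-case analysis (reversed edge present or absent in $H$) that $H$ cannot represent all three, which already gives injectivity. You instead prove constructive identifiability: recover the skeleton of $G$ by testing $c$-separability of $v_i$ and $v_j$, then orient each skeleton edge by detecting the unshielded collider $v_i \rightarrow v_j \leftarrow w_j$ through the ``separated without $v_j$, connected with $v_j$'' pattern. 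Your route is longer but yields an explicit reconstruction procedure, and your closing observation that $c$-separation coincides with $d$-separation after adjoining a common parent of all of $N(W)$ is precisely the device the paper itself uses later for the semi-graphoid lemma, so the two viewpoints mesh well. Your habit of loading the conditioning set with secondary variables is also the more robust way to control the step-(iv) clique: the minimally conditioned statement $w_j \ci v_i \mid w_i$ invoked in the paper's case (b) in fact fails once $v_i$ has further ancestors (for the chain $v_k \rightarrow v_i \rightarrow v_j$ the path $w_j - w_k - v_k - v_i$ is open given $w_i$), whereas a statement such as $w_j \ci v_i \mid (w_k)_{k \neq j}$ --- which your existential collider criterion is free to exploit --- does hold whenever $v_i \rightarrow v_j \in G$.

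Two small repairs to your sketch, neither of which affects its validity. First, the parenthetical choice of primary $d$-separator is swapped: $\text{pa}_G(j)$ separates $v_i$ from $v_j$ when $v_j \notin \text{an}_G(v_i)$ (i.e.\ when $v_i$ is a non-descendant of $v_j$), and $\text{pa}_G(i)$ works in the remaining case. Second, in the skeleton step your $C$ deliberately omits $w_j$, so a path may leave the secondary clique through the moralisation edges at $v_j$, e.g.\ $v_i - \cdots - v_k - w_j - v_j$ with $v_k \in \text{pa}_G(j)$; this is not covered by the remark that secondary nodes can simply be conditioned away. Either add $w_j$ (indeed all of $N(W)$) to $C$, or note that replacing the hop through $w_j$ by the moral edges among $\text{pa}_G(j) \cup \{v_j\}$ yields a purely primary path that the $d$-separator $S$ must already block.
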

\begin{proof}
Consider two distinct DAGs $G,H \in \mathcal{G}$ and suppose that, without loss of generality, the edge $v_i \rightarrow v_j$ belongs to $G$ and not to $H$.
It suffices to show that $\mathcal{M}_G \neq \mathcal{M}_H$.
First notice that $G$ represents the relations (i) $w_i \nci v_j|w_j \; [\overline{G}]$, (ii) $w_i \nci v_j | w_j, (v_k)_{k \in V \setminus \{i,j\}} \; [\overline{G}]$, and (iii) $w_j \ci v_i | w_i \; [\overline{G}]$.
(These can each be directly  verified by $c$-separation.)
We show below that $H$ cannot also represent (i-iii) and hence, from Def. \ref{indep model}, it follows that $\mathcal{M}_G \neq \mathcal{M}_H$.
We distinguish between two cases for $H$, namely (a) $v_i \leftarrow v_j \notin H$, and (b) $v_i \leftarrow v_j \in H$.

Case (a): Suppose (i) also holds for $H$; that is, $w_i \nci v_j|w_j \; [\overline{H}]$. 
Then since $v_i \rightarrow v_j \notin H$, it follows from $c$-separation that the variable $v_i$ must be connected to $v_j$ by directed path(s) whose interior vertices must only belong to $N(V) \setminus \{v_i,v_j\}$.
Thus $H$ implies the relation $w_i \ci v_j | w_j , (v_k)_{k \in V \setminus \{i,j\}} \; [\overline{H}]$, so that (ii) cannot also hold.

Case (b): Since $v_i \leftarrow v_j \in H$ it follows from $c$-separation that $w_j \nci v_i | w_i \; [\overline{H}]$, so that (iii) does not hold for $H$.
\end{proof}
\begin{remark}
More generally the same argument shows that a DAG $G \in \mathcal{G}$ satisfies $v_i \rightarrow v_j \notin G$ if and only if $\exists S \subseteq \text{pa}_G(j) \setminus \{i\}$ such that $w_i \ci v_j| w_j, (v_k)_{k \in S} \; [\overline{G}]$.
As a consequence, we have the  interpretation that conditional upon the (secondary variables) $(X_i)_{i \in W}$, the solid arrows in Fig. \ref{illustrate} encode conditional independence relationships among the (primary variables) $(Y_i)_{i \in V}$.
This motivates the name ``conditional DAG''.
\end{remark}

It is well known that conditional independence (and causal) relationships can usefully be described through a qualitative, graphical representation.
However to be able to use a graph for {\it reasoning} it is necessary for that graph to embody certain assertions that themselves obey a logical calculus.
\cite{Pearl4} proposed such a set of rules (the semi-graphoid axioms) that any reasonable set of assertions about how one set of variables might be irrelevant to the prediction of a second, given the values of a third, might hold \citep[see also][]{Dawid3,Studeny}.
This can then be extended to causal assertions \citep{Pearl}, thereby permitting study of causal hypotheses and their consequences without the need to first construct elaborate probability spaces and their extensions under control.
Below we establish that the independence models $\mathcal{M}_G$ induced by $c$-separation on CDAGs are semi-graphoids and thus enable reasoning in the present setting with two kinds of variables:

\begin{lemma}[Semi-graphoid]
For any DAG $G \in \mathcal{G}$, the set $\mathcal{M}_G$ is {\it semi-graphoid} \citep{Pearl2}. That is to say, for all disjoint $A,B,C,D \subseteq N(V) \cup N(W)$ we have
\begin{enumerate}[(i)]
\item (triviality) $A \ci \emptyset | C \; [\mathcal{M}_G]$
\item (symmetry) $A \ci B | C \; [\mathcal{M}_G]$ implies $B \ci A | C \; [\mathcal{M}_G]$
\item (decomposition) $A \ci B,D | C \; [\mathcal{M}_G]$ implies $A \ci D | C \; [\mathcal{M}_G]$
\item (weak union) $A \ci B,D | C \; [\mathcal{M}_G]$ implies $A \ci B | C,D \; [\mathcal{M}_G]$
\item (contraction) $A \ci B | C,D \; [\mathcal{M}_G]$ and $A \ci D | C \; [\mathcal{M}_G]$ implies $A \ci B,D | C \; [\mathcal{M}_G]$
\end{enumerate}
\end{lemma}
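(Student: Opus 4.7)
The cleanest route is to reduce $c$-separation in $\overline{G}$ to ordinary $d$-separation in a suitably augmented DAG, and then invoke the classical fact that $d$-separation in any DAG yields a semi-graphoid \citep{Pearl2}. The main work is in establishing the reduction; the axioms then transfer for free.

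\emph{Step 1 (augmentation).} Introduce a fresh latent node $z \notin N(V) \cup N(W)$ and define $\overline{G}^{\ast}$ to be $\overline{G}$ with additional directed edges $z \to w_i$ for every $i \in W$. Since $z$ has no incoming edges and all outgoing edges terminate in secondary nodes, $\overline{G}^{\ast}$ is a DAG. The role of $z$ is to act as a single common ancestor for all of $N(W)$, mirroring step (iv) of $c$-separation through moralisation.

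\emph{Step 2 (equivalence of separation criteria).} The key claim is that for all disjoint $A, B, C \subseteq N(V) \cup N(W)$,
\begin{eqnarray}
A \ci B \mid C \; [\overline{G}] \quad \Longleftrightarrow \quad A \ci B \mid C \; [\overline{G}^{\ast}],
\end{eqnarray}
where the left-hand side uses $c$-separation and the right-hand side uses the standard ancestral/moralisation criterion for $d$-separation. To see this, apply the moral-graph construction to $\overline{G}^{\ast}$: $z$ lies in $\mathrm{an}_{\overline{G}^{\ast}}(A \cup B \cup C)$ iff some $w_i$ does; in that case moralising makes $z$ a common parent of exactly those $w_i \in \mathrm{an}_{\overline{G}}(A \cup B \cup C)$, which introduces an undirected edge between every such pair. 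This reproduces step (iv) in the $c$-separation recipe, so the undirected skeletons coincide up to the extra node $z$ and its incident edges. Since $z \notin A \cup B \cup C$, any path through $z$ has the form $\cdots - w_i - z - w_j - \cdots$ and is redundant with the direct edge $w_i - w_j$ supplied by the same moralisation. Hence a path between $A$ and $B$ avoiding $C$ exists in one graph iff it does in the other, proving the equivalence.

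\emph{Step 3 (transfer of axioms).} It is a standard result that $d$-separation on any DAG defines a semi-graphoid independence model; applying it to $\overline{G}^{\ast}$ yields properties (i)--(v) for all disjoint subsets of $N(V) \cup N(W) \cup \{z\}$. Each axiom is closed under restriction to subsets avoiding $z$, so Step 2 identifies these restricted instances with $c$-separation statements in $\overline{G}$, completing the proof.

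\emph{Main obstacle.} The delicate point is Step 2: one must verify that augmenting by $z$ neither creates a separating path not already present in $U_4$ nor destroys one, and that the set of ancestral nodes used in the two constructions coincides on $N(V) \cup N(W)$. Once this equivalence is clearly stated, the semi-graphoid properties follow without any further case-by-case checking of the five axioms.
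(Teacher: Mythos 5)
Your proposal is essentially the paper's own proof: the paper likewise augments $\overline{G}$ with a new node $z$ and directed edges $z \rightarrow w_i$ for all $i \in W$, shows that $c$-separation in $\overline{G}$ coincides with classical $d$-separation in this extended DAG via the correspondence between the added edges $w_i - w_j$ of $U_4$ and sub-paths $w_i - z - w_j$, and then invokes the semi-graphoid property of $d$-separation together with closure of semi-graphoids under restriction to a vertex subset. One minor slip in your Step 2: moralisation joins parents of a common child, not children of a common parent, so it does not supply direct $w_i - w_j$ edges in the augmented graph; the connecting role is played by the paths $w_i - z - w_j$ themselves (which you also note, and which is exactly how the paper states the equivalence), so the argument goes through unchanged.
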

\begin{proof}
The simplest proof is to note that our notion of $c$-separation is equivalent to classical $d$-separation applied to an extension $\underline{G}$ of the CDAG $\overline{G}$.
The semi-graphoid properties then follow immediately by the facts that (i) $d$-separation satisfies the semi-graphoid properties \citep[p.48 of][]{Studeny}, and (ii) the restriction of a semi-graphoid to a subset of vertices is itself a semi-graphoid \citep[p.14 of][]{Studeny}.

Construct an extended graph $\underline{G}$ from the CDAG $\overline{G}$ by the addition of a node $z$ and directed edges from $z$ to each of the secondary vertices $N(W)$.
Then for disjoint $A,B,C \subseteq N(V) \cup N(W)$ we have that $A$ and $B$ are $c$-separated by $C$ in $\overline{G}$ if and only if $A$ and $B$ are $d$-separated by $C$ in $\underline{G}$.
This is because every path in the undirected graph $U_4(\overline{G})$ (recall the definition of $c$-separation) that contains an edge $w_i \rightarrow w_j$ corresponds uniquely to a path in $U_3(\underline{G})$ that contains the sub-path $w_i \rightarrow z \rightarrow w_j$.
\end{proof}

\subsection{Causal CDAG models}
\label{section_causal_CDAG}
The previous section defined CDAG models using the framework of formal independence models.
However, CDAGs can also be embellished with a causal interpretation, that we make explicit below.
In this paper we make a causal sufficiency assumption that the $(X_i)_{i \in W}$ are the only source of confounding for the $(Y_i)_{i \in V}$ and below we talk about direct causes at the level of $(X_i)_{i \in W} \cup (Y_i)_{i \in V}$.

\begin{definition}[Causal CDAG]
A CDAG is \emph{causal} when an edge $v_i \rightarrow v_j$ exists if and only if $Y_i$ is a direct cause of $Y_j$. 
It is further assumed that $X_i$ is a direct cause of $Y_i$ and not a direct cause of $Y_j$ for $j \neq i$.
Finally it is assumed that no $Y_i$ is a direct cause of any $X_j$.
\end{definition}
\begin{remark}
Here \emph{direct cause} is understood to mean that the parent variable has a ``controlled direct effect'' on the child variable in the framework of Pearl \citep[e.g. Def. 4.5.1 of][]{Pearl} (it is not necessary that the effect is {\it physically} direct).
No causal assumptions are placed on interaction between the secondary variables $(X_i)_{i \in W}$.
\end{remark}
\begin{remark}
In a causal CDAG the secondary variables $(X_i)_{i \in W}$ share some of the properties of instrumental variables \citep{Didilez}.
Consider estimating the average causal effect of $Y_i$ on $Y_j$. Then, conditioning on $X_j$ in the following, $X_i$ can be used as a natural experiment \citep{Greenland} to determine the size and sign of this causal effect.
When we are interested in the controlled direct effect, we can repeat this argument with additional conditioning on the $(Y_k)_{k \in V \setminus \{i,j\}}$ (or a smaller subset of conditioning variables if the structure of $G$ is known).
\end{remark}

\subsection{Identifiability of CDAGs} \label{theory}

There exist well-known identifiability results for independence models $\mathcal{M}$ that are induced by Bayesian networks; see for example \cite{Spirtes,Pearl}.
These relate the observational distribution $\mathbb{P}^{(Y_i)}$ of the random variables $(Y_i)_{i \in V}$ to an appropriate DAG representation by means of $d$-separation, Markov and faithfulness assumptions (discussed below).
The problem of identification for CDAGs is complicated by the fact that (i) the primary variables $(Y_i)_{i \in V}$ are insufficient for identification, (ii) the joint distribution $\mathbb{P}^{(X_i) \cup (Y_i)}$ of the primary variables $(Y_i)_{i \in V}$ and the secondary variables $(X_i)_{i \in W}$ need not be Markov with respect to the CDAG $\overline{G}$, and (iii) we must work with the alternative notion of $c$-separation.
Below we propose novel ``partial'' Markov and faithfulness conditions that will permit, in the next section, an identifiability theorem for CDAGs.
We make the standard assumption that there is no selection bias (for example by conditioning on common effects).

\begin{Assumption}[Existence]
There exists a true CDAG $\overline{G}$.
In other words, the observational distribution $\mathbb{P}^{(X_i) \cup (Y_i)}$ induces an independence model that can be expressed as $\mathcal{M}_G$ for some DAG $G \in \mathcal{G}$.
\end{Assumption}

\begin{definition}[Partial Markov] \label{markov}
Let $G$ denote the true DAG.
We say that the observational distribution $\mathbb{P}^{(X_i) \cup (Y_i)}$ is \emph{partially Markov} with respect to $G$ when the following holds:
For all disjoint subsets $\{ i \},\{j\},C \subseteq \{1,\dots,p\}$ we have $w_i \ci v_j | w_j, (v_k)_{k \in C} \; [\mathcal{M}_G] \Rightarrow X_i \ci Y_j | X_j, (Y_k)_{k \in C}$.
\end{definition}

\begin{definition}[Partial faithfulness] \label{faithful}
Let $G$ denote the true DAG.
We say that the observational distribution $\mathbb{P}^{(X_i) \cup (Y_i)}$ is \emph{partially faithful} with respect to $G$ when the following holds:
For all disjoint subsets $\{ i \},\{j\},C \subseteq \{1,\dots,p\}$ we have $w_i \ci v_j | w_j, (v_k)_{k \in C} \; [\mathcal{M}_G] \Leftarrow X_i \ci Y_j | X_j, (Y_k)_{k \in C}$
\end{definition}

\begin{remark}
The partial Markov and partial faithfulness properties do not place any constraint on the marginal distribution $\mathbb{P}^{(X_i)}$ of the secondary variables.
\end{remark}

The following is an immediate corollary of Lem. \ref{equiv}:

\begin{theorem}[Identifiability] \label{idenfity}
Suppose that the observational distribution $\mathbb{P}^{(X_i) \cup (Y_i)}$ is partially Markov and partially faithful with respect to the true DAG $G$. Then
\begin{enumerate}[(i)]
\item It is not possible to identify the true DAG $G$ based on the observational distribution $\mathbb{P}^{(Y_i)}$ of the primary variables alone.
\item It is possible to identify the true DAG $G$ based on the observational distribution $\mathbb{P}^{(X_i) \cup (Y_i)}$.
\end{enumerate}
\end{theorem}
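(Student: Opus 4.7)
The plan is to handle the two parts of the theorem separately: part (ii) follows almost directly from Lemma \ref{equiv} together with the subsequent remark, while part (i) needs only a small counterexample. For part (ii), I would invoke the remark following Lemma \ref{equiv}, which states that an edge $v_i \to v_j$ is absent from $G$ if and only if there exists some $S \subseteq \mathrm{pa}_G(j) \setminus \{i\}$ with $w_i \ci v_j | w_j, (v_k)_{k \in S} \; [\overline{G}]$. The crux is to translate these $c$-separation statements into statements about the observational distribution. By Definition \ref{markov}, the $c$-separation relation $w_i \ci v_j | w_j, (v_k)_{k \in S} \; [\mathcal{M}_G]$ implies the conditional independence $X_i \ci Y_j | X_j, (Y_k)_{k \in S}$, and by Definition \ref{faithful} the converse also holds. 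Consequently, by ranging over all ordered pairs $(i,j)$ with $i \neq j$ and all subsets $S \subseteq V \setminus \{i,j\}$, the joint distribution $\mathbb{P}^{(X_i) \cup (Y_i)}$ determines, through testable conditional independences, the presence or absence of each primary-to-primary edge in $G$. Identifiability then follows either directly from this edge recovery or, equivalently, by combining the translation with the injectivity of $G \mapsto \mathcal{M}_G$ from Lemma \ref{equiv}.

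For part (i), I would construct two distinct DAGs on two primary vertices that admit CDAG distributions with matching primary marginals. Take $G_1 : v_1 \to v_2$ and $G_2 : v_2 \to v_1$, and consider a linear Gaussian specification under $\overline{G_1}$: let $(X_1, X_2)$ follow any joint Gaussian, set $Y_1 = \alpha X_1 + \varepsilon_1$ and $Y_2 = \beta Y_1 + \gamma X_2 + \varepsilon_2$ with independent Gaussian noises. For generic parameters this is partially Markov and partially faithful with respect to $G_1$. Marginalising out $(X_1, X_2)$ yields a bivariate Gaussian on $(Y_1, Y_2)$ with some covariance $\Sigma$. However any non-degenerate bivariate Gaussian also factorises in the opposite direction, so there exist parameters $\alpha', \beta', \gamma'$ and a compatible joint on $(X_1, X_2)$ giving the analogous construction under $\overline{G_2}$ that reproduces the same $\Sigma$. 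Thus distinct $G_1, G_2$ may induce identical $\mathbb{P}^{(Y_i)}$, proving non-identifiability from primary marginals alone.

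The main obstacle I anticipate is in part (i): the constructed counterexample must simultaneously satisfy partial Markov and partial faithfulness under the DAG it is associated with, which in the linear Gaussian setting above reduces to avoiding accidental zero-covariance among the secondary-to-primary relations; this holds on a generic parameter set and so is achieved by a generic choice of $\alpha, \beta, \gamma$. Part (ii) is essentially bookkeeping on top of Lemma \ref{equiv} once the partial Markov/faithfulness translation is made explicit, and no deeper graphical argument is required.
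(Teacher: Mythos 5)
Your proposal is correct, and part (ii) follows essentially the paper's own route: the paper simply states that identifiability ``follows immediately from Lemma \ref{equiv}'', and your write-up makes explicit the bookkeeping the paper leaves implicit, namely that the only $c$-separation statements needed to pin down $G$ are of the special form $\langle w_i, v_j \,|\, w_j, (v_k)_{k\in C}\rangle$ covered by the partial Markov and partial faithfulness definitions, so that the edge criterion ``$v_i \rightarrow v_j \notin G$ iff $\exists\, C \subseteq V\setminus\{i,j\}$ with $X_i \ci Y_j \,|\, X_j, (Y_k)_{k\in C}$'' is readable off the joint distribution. For part (i) you take a genuinely different route. The paper argues informally that $\mathbb{P}^{(Y_i)}$ is not Markov with respect to $G$, since an association $Y_i \nci Y_j \,|\, (Y_k)_{k\in V\setminus\{i,j\}}$ could be due either to a direct edge or to dependence mediated entirely through the secondary variables; this is an appeal to confounding rather than a formal demonstration. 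You instead construct an explicit two-node linear-Gaussian counterexample: $G_1: v_1\rightarrow v_2$ and $G_2: v_2\rightarrow v_1$, each equipped with a CDAG-compatible structural equation model that is (generically) partially Markov and partially faithful, such that both induce the same bivariate Gaussian marginal on $(Y_1,Y_2)$ -- which is possible because any non-degenerate bivariate Gaussian factorises in either direction. This is arguably the more rigorous way to establish non-identifiability, since it exhibits two distinct DAGs satisfying the theorem's hypotheses with identical primary marginals, whereas the paper's argument conveys the intuition but does not produce such a pair. The only point to tighten in your version is the verification (which you flag) that both constructions satisfy partial Markov and partial faithfulness for the chosen parameters; for generic $\alpha,\beta,\gamma$ and non-degenerate noise this holds, so the counterexample goes through.
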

\begin{proof}
(i) We have already seen that $\mathbb{P}^{(Y_i)}$ is not Markov with respect to the DAG $G$:
Indeed a statistical association $Y_i \nci Y_j| (Y_k)_{k \in V \setminus \{i,j\}}$ observed in the distribution $\mathbb{P}^{(Y_i)}$ could either be due to a direct interaction $Y_i \rightarrow Y_j$ (or $Y_j \rightarrow Y_i$), or could be mediated entirely through variation in the secondary variables $(X_k)_{k \in W}$.
(ii) It follows immediately from Lemma \ref{equiv} that observation of both the primary and secondary variables $(Y_i)_{i \in V} \cup (X_i)_{i \in W}$ is sufficient to facilitate the identification of $G$.
\end{proof}

\subsection{Estimating CDAGs from data}

In this section we assume that the partial Markov and partial faithfulness properties hold, so that the true DAG $G$ is identifiable from the joint observational distribution of the primary and secondary variables.
Below we consider score-based estimation for CDAGs and prove consistency of certain score-based CDAG estimators.

\begin{definition}[Score function; \cite{Chickering}]
A \emph{score function} is a map $S:\mathcal{G} \rightarrow [0,\infty)$ with the interpretation that if two DAGs $G,H \in \mathcal{G}$ satisfy $S(G)<S(H)$ then $H$ is preferred to $G$.
\end{definition}

We will study the asymptotic behaviour of $\hat{G}_S$, the estimate of graph structure obtained by maximising $S(G)$ over all $G \in \mathcal{G}$ based on observations $(X_i^j,Y_i^j)_{i=1,\dots,p}^{j=1,\dots,n}$.
Let $\mathbb{P}_n = \mathbb{P}^{(X_i^j,Y_i^j)}$ denote the finite-dimensional distribution of the $n$ observations.

\begin{definition}[Partial local consistency]
We say the score function $S$ is \emph{partially locally consistent} if, whenever $H$ is constructed from $G$ by the addition of one edge $Y_i \rightarrow Y_j$, we have
\begin{enumerate}
\item $X_i \nci Y_j | X_j, (Y_k)_{k \in \text{pa}_G(j)} \Rightarrow \lim_{n \rightarrow \infty}\mathbb{P}_n[S(H) > S(G)] = 1$
\item $X_i \ci Y_j | X_j, (Y_k)_{k \in \text{pa}_G(j)} \Rightarrow \lim_{n \rightarrow \infty}\mathbb{P}_n[S(H) < S(G)] = 1$.
\end{enumerate}
\end{definition}

\begin{theorem}[Consistency]
If $S$ is partially locally consistent then $\lim_{n \rightarrow \infty} \mathbb{P}_n[\hat{G}_S = G] = 1$, so that $\hat{G}_S$ is a consistent estimator of the true DAG $G$.
\end{theorem}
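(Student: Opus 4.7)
The strategy is to establish that, for every $H \in \mathcal{G}$ with $H \neq G$, one has $\lim_{n \to \infty} \mathbb{P}_n[S(G) > S(H)] = 1$; since $|\mathcal{G}|$ is finite, a union bound then delivers the stated conclusion. I would proceed in two steps: a local optimality step showing that $G$ is an asymptotic strict local maximum of $S$, and a global optimality step extending this to $G$ being the unique asymptotic global maximum.

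For Step~1 (local optimality of $G$), consider first an insertion $H' = G + \{Y_i \to Y_j\}$ that remains a DAG. Inspecting the $c$-separation construction for $v_j$ shows that the only neighbours of $v_j$ in the associated undirected graph $U_4$ are $\text{pa}_G(j) \cup \{w_j\}$: in a DAG no descendant of $v_j$ can lie in $\text{an}_{\overline{G}}(\{v_j, w_i, w_j\} \cup \text{pa}_G(j))$, and $w$-nodes have no $V$-ancestors in $\overline{G}$, so moralisation introduces no further neighbours of $v_j$. Hence $v_j \ci w_i \mid w_j, (v_k)_{k \in \text{pa}_G(j)} \; [\overline{G}]$; partial Markov (Def.~\ref{markov}) transfers this to $X_i \ci Y_j \mid X_j, (Y_k)_{k \in \text{pa}_G(j)}$, and item~2 of partial local consistency yields $\lim_n \mathbb{P}_n[S(H')<S(G)]=1$. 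For a deletion $H' = G \setminus \{Y_i \to Y_j\}$ with $v_i \to v_j \in G$, the remark following Lemma~\ref{equiv} applied with the subset $\text{pa}_G(j)\setminus\{i\}$ gives $w_i \nci v_j \mid w_j, (v_k)_{k \in \text{pa}_G(j)\setminus\{i\}} \; [\overline{G}]$; partial faithfulness (Def.~\ref{faithful}) transfers this to the data, and item~1 of partial local consistency yields $\lim_n \mathbb{P}_n[S(G)>S(H')]=1$.

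For Step~2 (global optimality), given $H \neq G$, I would connect $H$ to $G$ through a chain $H = H_0, H_1, \ldots, H_m = G$ whose consecutive members differ by a single edge insertion or deletion, acyclicity is preserved throughout, and the asymptotic score strictly increases at every step. The construction fixes a topological ordering of $G$ and, processing vertices in that order, first deletes from the working graph any parent of $v_j$ not in $\text{pa}_G(j)$ and then inserts any missing parent of $G$. At each single-edge move, partial local consistency reduces the question to a $c$-separation query in $\overline{G}$ (via partial Markov and partial faithfulness) with conditioning set determined by the current working graph; this query can be resolved using the remark after Lemma~\ref{equiv} together with the semi-graphoid properties established in the preceding lemma. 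Summing the strict asymptotic improvements along the chain then yields $S(G) > S(H)$ with probability tending to one.

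The main obstacle lies in Step~2: partial local consistency phrases each score change through a conditional (in)dependence conditioned on the \emph{working graph's} parent set, whereas the characterisation of the edges of $G$ from the remark after Lemma~\ref{equiv} is phrased in terms of subsets of $\text{pa}_G$. Reconciling these two viewpoints is what dictates the topologically-ordered, phase-separated chain outlined above, and constitutes the technically most delicate component of the proof. This mirrors the analogous step in the classical consistency argument of \cite{Chickering} for score-based learning of ordinary DAGs.
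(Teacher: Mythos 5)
Your Step 1 is sound, but on its own it only shows that $G$ asymptotically beats its one-edge neighbours, so your whole strategy rests on Step 2 --- and Step 2 is precisely where there is a genuine gap. You assert, without proof, that the phase-separated chain (delete extra parents, then insert missing ones, vertex by vertex in a topological order of $G$) strictly increases the asymptotic score at every move, but as described this can fail. Take $G$ with edges $v_1 \rightarrow v_3$, $v_2 \rightarrow v_3$ and $H$ with edges $v_1 \rightarrow v_2$, $v_3 \rightarrow v_2$. Processing $v_2$, suppose the deletion of $v_1 \rightarrow v_2$ is attempted while $v_3 \rightarrow v_2$ is still present: the relevant conditioning set is $\{v_3\}$, and since $v_3$ is a common child of $v_1$ and $v_2$ in $G$, moralisation in the $c$-separation construction gives $w_1 \nci v_2 \,|\, w_2, v_3 \; [\overline{G}]$; the contrapositive of partial faithfulness then gives $X_1 \nci Y_2 \,|\, X_2, Y_3$, and item 1 of partial local consistency says that this deletion asymptotically \emph{decreases} the score. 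So ``strict increase at every step'' is false for the chain as you specify it; rescuing it requires an ordering lemma for the deletions within each phase (plus a check that intermediate insertions preserve acyclicity), and this --- the part you yourself flag as the delicate component --- is exactly what is not supplied. Your appeal to the remark after Lemma \ref{equiv} does not help here, because that remark characterises (non-)separation only for conditioning sets contained in $\text{pa}_G(j)$, whereas the chain forces conditioning on parent sets of intermediate graphs, which may contain descendants and colliders of the true $G$.

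The paper's proof avoids this difficulty entirely, and you should note how much shorter the route is: it never compares $S(G)$ with $S(H)$ and never builds a path between them. Instead, for an arbitrary $H \neq G$ it exhibits a \emph{single} one-edge modification $H'$ of $H$ with $\lim_n \mathbb{P}_n[S(H') > S(H)] = 1$: if $H$ contains an edge $v_i \rightarrow v_j \notin G$, remove it, using $w_i \ci v_j \,|\, w_j, (v_k)_{k \in \text{pa}_G(j)} \; [\overline{G}]$ together with the partial Markov property; otherwise $H$ is missing some edge $v_i \rightarrow v_j \in G$, and one adds it, using the corresponding non-separation together with partial faithfulness. This already forces $\lim_n \mathbb{P}_n[\hat{G}_S = H] = 0$ for every $H \neq G$, and finiteness of $\mathcal{G}$ (your union bound) concludes. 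In short: the key idea you are missing is that it suffices to show every wrong graph is asymptotically beaten by one of \emph{its own} neighbours, with the needed (in)dependences always checked against conditioning sets drawn from the true $G$; establishing that $G$ is a global maximiser via a monotone chain is both unnecessary and, as outlined, not achieved.
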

\begin{proof}
It suffices to show that $\lim_{n \rightarrow \infty} \mathbb{P}_n[\hat{G}_S = H] = 0$ whenever $H \neq G$.
There are two cases to consider:

Case (a): Suppose $v_i \rightarrow v_j \in H$ but $v_i \rightarrow v_j \notin G$.
Let $H'$ be obtained from $H$ by the removal of $v_i \rightarrow v_j$.
From $c$-separation we have $w_i \ci v_j | w_j, (v_k)_{k \in \text{pa}_G(j)} \; [\overline{G}]$ and hence from the partial Markov property we have $ X_i \ci Y_j | X_j, (Y_k)_{k \in \text{pa}_G(j)}$. 
Therefore if $S$ is partially locally consistent then $\lim_{n \rightarrow \infty}\mathbb{P}_n[S(H) < S(H')] = 1$, so that $\lim_{n \rightarrow \infty} \mathbb{P}_n[\hat{G}_S = H] = 0$.

Case (b): Suppose $v_i \rightarrow v_j \notin H$ but $v_i \rightarrow v_j \in G$.
Let $H'$ be obtained from $H$ by the addition of $v_i \rightarrow v_j$.
From $c$-separation we have $w_i \nci v_j | w_j, (v_k)_{k \in \text{pa}_G(j)} \; [\overline{G}]$ and hence from the partial faithfulness property we have $ X_i \nci Y_j | X_j, (Y_k)_{k \in \text{pa}_G(j)}$.
Therefore if $S$ is partially locally consistent then $\lim_{n \rightarrow \infty}\mathbb{P}_n[S(H) < S(H')] = 1$, so that $\lim_{n \rightarrow \infty} \mathbb{P}_n[\hat{G}_S = H] = 0$.
\end{proof}

\begin{remark}
In this paper we adopt a {\it maximum a posteriori} (MAP) -Bayesian approach and consider score functions given by a posterior probability $p(G|(x_i^l,y_i^l)_{i = 1,\dots,p}^{l=1,\dots,n})$ of the DAG $G$ given the data $(x_i^l,y_i^l)_{i=1,\dots,p}^{l=1,\dots,n}$. 
This requires that a prior $p(G)$ is specified over the space $\mathcal{G}$ of DAGs.
From the partial Markov property we have that, for $n$ independent observations, such score functions factorise as 
\begin{eqnarray}
S(G) = p(G) \prod_{l=1}^n p((x_i^l)_{i = 1,\dots,p}) \prod_{i = 1}^p p(y_i^l|(y_k^l)_{k \in \text{pa}_G(i)},x_i^l). \label{factorise}
\end{eqnarray}
We further assume that the DAG prior $p(G)$ factorises over parent sets $\text{pa}_G(i) \subseteq V \setminus \{i\}$ as 
\begin{eqnarray}
p(G) = \prod_{i = 1}^p p(\text{pa}_G(i)).
\end{eqnarray}
This implies that the score function in Eqn. \ref{factorise} is \emph{decomposable} and the maximiser $\hat{G}_S$, i.e. the MAP estimate, can be obtained via integer linear programming. In the supplement we derive an integer linear program that targets the CDAG $\hat{G}_S$ and thereby allows exact (i.e. deterministic) estimation in this class of models.
\end{remark}

\begin{lemma} \label{character}
A score function of the form Eqn. \ref{factorise} is partially locally consistent if and only if, whenever $H$ is constructed from $G$ by the addition of one edge $v_i \rightarrow v_j$, we have
\begin{enumerate}
\item $X_i \nci Y_j | X_j, (Y_k)_{k \in \text{pa}_G(j)} \Rightarrow \lim_{n \rightarrow \infty} \mathbb{P}_n[B_{H,G} > 1] = 1$
\item $X_i \ci Y_j | X_j, (Y_k)_{k \in \text{pa}_G(j)} \Rightarrow \lim_{n \rightarrow \infty} \mathbb{P}_n[B_{H,G} < 1] = 1$
\end{enumerate}
where 
\begin{eqnarray}
B_{H,G} = \frac{ p((Y_j^l)^{l = 1,\dots,n}|(Y_k^l)_{k \in \text{pa}_H(j)}^{l = 1,\dots,n},(X_j^l)^{l = 1,\dots,n}) }{ p((Y_j^l)^{l = 1,\dots,n}|(Y_k^l)_{k \in \text{pa}_G(j)}^{l = 1,\dots,n},(X_j^l)^{l = 1,\dots,n}) }
\end{eqnarray}
is the Bayes factor between two competing local models $\text{pa}_G(j)$ and $\text{pa}_H(j)$.
\end{lemma}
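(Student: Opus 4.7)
The plan is an algebraic reduction exploiting the multiplicative decomposability of Equation~\ref{factorise}. When $H$ is obtained from $G$ by adding a single edge $v_i\to v_j$, the parent sets satisfy $\text{pa}_H(k)=\text{pa}_G(k)$ for every $k\neq j$, while $\text{pa}_H(j)=\text{pa}_G(j)\cup\{i\}$. In forming the ratio $S(H)/S(G)$, the $G$-free marginal term on the secondary variables $\prod_l p((x_i^l)_i)$, every conditional factor $p(Y_k^{1:n}\mid Y_{\text{pa}(k)}^{1:n},X_k^{1:n})$ for $k\neq j$, and---by the prior factorisation $p(G)=\prod_k p(\text{pa}_G(k))$---every prior term for $k\neq j$ all cancel. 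What remains is precisely the $j$-th likelihood ratio, i.e.\ $B_{H,G}$, multiplied by the prior ratio $c:=p(\text{pa}_H(j))/p(\text{pa}_G(j))$.

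Because $c>0$ is a strictly positive constant independent of the sample size $n$, we obtain the clean identity $S(H)/S(G)=c\,B_{H,G}$, so that $\{S(H)>S(G)\}=\{B_{H,G}>1/c\}$ and $\{S(H)<S(G)\}=\{B_{H,G}<1/c\}$. Partial local consistency then asks for $\mathbb{P}_n[B_{H,G}>1/c]\to 1$ under $X_i\nci Y_j\mid X_j,(Y_k)_{k\in\text{pa}_G(j)}$ and for $\mathbb{P}_n[B_{H,G}<1/c]\to 1$ under the corresponding independence hypothesis. Since $c$ is a fixed positive constant, these requirements are equivalent to the analogous statements with the threshold $1/c$ replaced by unity: they force the log-Bayes factor $\log B_{H,G}$ to diverge in probability to $+\infty$ and to $-\infty$ respectively, and once this divergence occurs the particular finite threshold used to describe it is immaterial. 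This delivers exactly the characterisation $\mathbb{P}_n[B_{H,G}>1]\to 1$ and $\mathbb{P}_n[B_{H,G}<1]\to 1$ stated in the lemma.

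The main obstacle is the careful book-keeping of the cancellation step. One must verify that all three sources of cancellation---the secondary-variable marginal, the likelihood factors for $k\neq j$, and the prior factors for $k\neq j$---really do depend only on quantities that agree between $G$ and $H$, so that the ratio collapses precisely to $c\,B_{H,G}$ rather than to $c\,B_{H,G}$ times some residual $n$-dependent term. This tacitly invokes the parameter-modularity implicit in treating each node's contribution to $S$ as its own local Bayesian model; without modularity the likelihood factors for $k\neq j$ would couple through shared parameters and fail to cancel. Once this is granted, the translation to the Bayes-factor condition is a transparent matter of rescaling thresholds by the fixed positive constant $c$.
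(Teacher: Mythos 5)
Your cancellation step is correct and is the natural route: by decomposability of Eqn.~\ref{factorise} and the prior factorisation, all terms for nodes $k \neq j$ and the secondary-variable marginal cancel, yielding $S(H)/S(G) = c\,B_{H,G}$ with $c = p(\text{pa}_H(j))/p(\text{pa}_G(j))$ a positive constant not depending on $n$. The gap is in the final step, where you pass from the threshold $1/c$ to the threshold $1$. Your justification --- that $\mathbb{P}_n[B_{H,G} > 1/c] \rightarrow 1$ ``forces the log-Bayes factor to diverge'' --- is a non sequitur: convergence of the probability of exceeding one fixed threshold does not imply divergence; $B_{H,G}$ could converge in probability to a finite constant above that threshold. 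Consequently the two families of conditions are not logically equivalent when $c \neq 1$, and with the multiplicity prior $p(\pi) \propto \binom{p}{|\pi|}^{-1}$ one has $c = \binom{p}{m}/\binom{p}{m+1} < 1$ for $m = |\text{pa}_G(j)| < p/2$, so $1/c > 1$. For instance, if under $X_i \nci Y_j \mid X_j, (Y_k)_{k \in \text{pa}_G(j)}$ the Bayes factor satisfied $B_{H,G} \rightarrow 2$ in probability while $1/c = 5$, then condition~1 of the lemma would hold but $\mathbb{P}_n[S(H) > S(G)] \rightarrow 0$, so the score would not be partially locally consistent; symmetrically, the ``only if'' direction also fails in general for the independence case.

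To close the gap you need an extra ingredient beyond the algebraic identity: namely that for the local Bayesian marginal likelihoods in question the log Bayes factor diverges in probability to $+\infty$ under conditional dependence and to $-\infty$ under conditional independence (this is exactly what standard Bayesian variable-selection consistency, e.g.\ for the unit-information $g$-prior, delivers, and it is how the lemma is subsequently used). Once divergence is available, any fixed positive threshold --- $1$ or $1/c$ --- is indeed interchangeable and your conclusion follows. Alternatively, state the characterisation your algebra actually proves, with the threshold $1/c$ (the prior odds) in place of $1$. As written, the equivalence-of-thresholds claim is asserted rather than proved, and that is precisely the point where the ``if and only if'' needs care.
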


\subsection{Bayes factors and common variables}

The characterisation in Lemma \ref{character} justifies the use of any consistent Bayesian variable selection procedure to obtain a score function.
The secondary variables $(X_i^l)^{l = 1,\dots,n}$ are included in all models, and parameters relating to these variables should therefore share a common prior.
Below we discuss a formulation of the Bayesian linear model that is suitable for CDAGs.

Consider variable selection for node $j$ and candidate parent (index) set $\text{pa}_G(j) = \pi \subseteq V \setminus \{j\}$.
We construct a linear model for the observations
\begin{eqnarray}
Y_j^l = [1 \; X_j^l] \bm{\beta}_0 + \bm{Y}_{\pi}^l \bm{\beta}_\pi + \epsilon_j^l, \; \; \; \epsilon_j^l \sim N(0,\sigma^2)
\end{eqnarray}
where $\bm{Y}_\pi^l = (Y_k^l)_{k \in \pi}$ is used to denote a row vector and the noise $\epsilon_j^l$ is assumed independent for $j = 1,\dots,p$ and $l = 1,\dots,n$.
Although suppressed in the notation, the parameters $\bm{\beta}_0$, $\bm{\beta}_\pi$ and $\sigma$ are specific to node $j$.
This regression model can be written in vectorised form as
\begin{eqnarray}
\bm{Y}_j =  \bm{M}_0 \bm{\beta}_0 +  \bm{Y}_\pi \bm{\beta}_\pi + \bm{\epsilon}
\end{eqnarray}
where $\bm{M}_0$ is the $n \times 2$ matrix whose rows are the $[1 \; X_j^l]$ for $l = 1,\dots,n$ and $\bm{Y}_\pi$ is the matrix whose rows are $\bm{Y}_\pi^l$ for $l = 1,\dots,n$.

We orthogonalize the regression problem by defining $\bm{M}_\pi = (\bm{I} - \bm{M}_0 (\bm{M}_0^T\bm{M}_0)^{-1}\bm{M}_0^T) \bm{Y}_{\pi}$ so that the model can be written as
\begin{eqnarray}
\bm{Y}_j =  \bm{M}_0 \tilde{\bm{\beta}}_0 +  \bm{M}_\pi \tilde{\bm{\beta}}_\pi + \bm{\epsilon}
\end{eqnarray}
where $\tilde{\bm{\beta}}_0$ and $\tilde{\bm{\beta}}_\pi$ are Fisher orthogonal parameters \citep[see][for details]{Deltell}.

In the conventional approach of \cite{Jeffreys}, the prior distribution is taken as
\begin{eqnarray}
p_{j,\pi}(\tilde{\bm{\beta}}_\pi,\tilde{\bm{\beta}}_0,\sigma) & = & p_{j,\pi}(\tilde{\bm{\beta}}_\pi|\tilde{\bm{\beta}}_0,\sigma) p_j(\tilde{\bm{\beta}}_0,\sigma) \\
p_j(\tilde{\bm{\beta}_0},\sigma) & \propto & \sigma^{-1} \label{reference prior}
\end{eqnarray}
where Eqn. \ref{reference prior} is the reference or independent Jeffreys prior.
(For simplicity of exposition we leave conditioning upon $\bm{M}_0$, and $\bm{M}_\pi$ implicit.)
The use of the reference prior here is motivated by the  observation that the common parameters $\bm{\beta}_0,\sigma$ have the same meaning in each model $\pi$ for variable $Y_j$ and should therefore share a common prior distribution \citep{Jeffreys}. Alternatively, the prior can be motivated by invariance arguments that derive $p(\bm{\beta}_0,\sigma)$ as a right Haar measure \citep{Bayarri}.
Note however that $\sigma$ does not carry the same meaning across $j \in V$ in the application that we consider below, so that the prior is specific to fixed $j$.
For the parameter prior $p_{j,\pi}(\tilde{\bm{\beta}}_\pi|\tilde{\bm{\beta}}_0,\sigma)$ we use the $g$-prior \citep{Zellner}
\begin{eqnarray} 
\tilde{\bm{\beta}}_\pi | \tilde{\bm{\beta}}_0,j,\pi \sim N(\bm{0},g \sigma^2 (\bm{M}_\pi^T \bm{M}_\pi)^{-1}) \label{gprior}
\end{eqnarray}
where $g$ is a positive constant to be specified.
Due to orthogonalisation, $\text{cov}(\hat{\bm{\beta}}_\pi) = \sigma^2 (\bm{M}_\pi^T\bm{M}_\pi)^{-1}$ where $\hat{\bm{\beta}}_\pi$ is the maximum likelihood estimator for $\tilde{\bm{\beta}}_\pi$, so that the prior is specified on the correct length scale \citep{Deltell}. 
We note that many alternatives to Eqn. \ref{gprior} are available in the literature \citep[including][]{Johnson,Bayarri}.

Under the prior specification above, the marginal likelihood for a candidate model $\pi$ has the following closed-form expression:
\begin{eqnarray}
p_j(\bm{y}_j|\pi) & = & \frac{1}{2} \Gamma\left(\frac{n-2}{2}\right) \frac{1}{\pi^{(n-2)/2}} \frac{1}{|\bm{M}_0^T \bm{M}_0|^{1/2}} \left(\frac{1}{g+1}\right)^{|\pi|/2}  b^{-(n-2)/2} \\
b & = & \bm{y}_j^T \left( \bm{I} - \bm{M}_0(\bm{M}_0^T \bm{M}_0)^{-1} \bm{M}_0^T - \frac{g}{g+1}\bm{M}_\pi (\bm{M}_\pi^T \bm{M}_\pi)^{-1} \bm{M}_\pi^T \right) \bm{y}_j 
\end{eqnarray}

For inference of graphical models, multiplicity correction is required to adjust for the fact that the size of the space $\mathcal{G}$ grows super-exponentially with the number $p$ of vertices \citep{Consonni}.
Following \cite{Scott}, we control multiplicity via the prior
\begin{eqnarray}
p(\pi) \propto \binom{p}{|\pi|}^{-1}.
\end{eqnarray}

\begin{theorem}[Consistency]
Let $g = n$.
Then the Bayesian score function $S(G)$ defined above is partially locally consistent, and hence the corresponding estimator $\hat{G}_S$ is consistent.
\end{theorem}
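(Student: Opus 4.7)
The plan is to verify the two conditions of Lemma \ref{character} for this specific Bayesian score, which will immediately give partial local consistency and, by the preceding consistency theorem, consistency of $\hat{G}_S$.

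First, I would assemble $B_{H,G}$ in closed form. Since the models $\pi_G$ and $\pi_H$ differ only by inclusion of the single regressor $Y_i$ (so $|\pi_H| = |\pi_G| + 1$), the marginal likelihood formula collapses to
\begin{equation*}
B_{H,G} \;=\; (g+1)^{-1/2}\left(\frac{b_G}{b_H}\right)^{(n-2)/2} \;=\; (n+1)^{-1/2}\left(\frac{b_G}{b_H}\right)^{(n-2)/2}
\end{equation*}
after substituting $g = n$. Writing $b_\pi = r_\pi + (g+1)^{-1} f_\pi$, where $r_\pi$ is the residual sum of squares after projection onto the span of $[\bm{M}_0, \bm{M}_\pi]$ and $f_\pi = \bm{y}_j^T \bm{M}_\pi(\bm{M}_\pi^T\bm{M}_\pi)^{-1}\bm{M}_\pi^T \bm{y}_j$ is the orthogonal contribution of $\bm{M}_\pi$, one has $b_\pi/n \to \sigma_\pi^2$ in $\mathbb{P}_n$-probability under standard design/ergodicity assumptions, where $\sigma_\pi^2$ is the population residual variance of $Y_j$ regressed on $\{1, X_j\} \cup (Y_k)_{k \in \pi}$.

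The two cases of Lemma \ref{character} then follow by a standard $g$-prior asymptotic argument. In the alternative case $X_i \nci Y_j | X_j, (Y_k)_{k \in \pi_G}$, the linear-Gaussian modelling assumption forces a non-zero population partial regression coefficient of $Y_j$ on $Y_i$ given $\{1, X_j\} \cup (Y_k)_{k \in \pi_G}$ (using that, by the causal CDAG structure, $X_i$ influences $Y_j$ only through $Y_i$). Hence $\sigma_{\pi_G}^2 > \sigma_{\pi_H}^2$, $b_G/b_H$ tends to some constant $c > 1$, and $(b_G/b_H)^{(n-2)/2}$ grows exponentially in $n$, easily dominating the $(n+1)^{-1/2}$ penalty — so $B_{H,G} \to \infty$. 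In the null case the same partial regression coefficient vanishes, $\sigma_{\pi_G}^2 = \sigma_{\pi_H}^2$, and classical nested-model theory gives $r_G - r_H \xrightarrow{d} \sigma^2 \chi_1^2$. Thus $b_G/b_H = 1 + O_p(1/n)$, the factor $(b_G/b_H)^{(n-2)/2}$ is stochastically bounded, and the prefactor $(n+1)^{-1/2}$ kills it, so $B_{H,G} \to 0$. Lemma \ref{character} then delivers the claim.

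The main obstacle is the bridge between the hypothesis $X_i \ci Y_j | X_j, (Y_k)_{k \in \pi_G}$, stated in terms of the secondary variable $X_i$, and the behavior of the Bayes factor for adding the primary variable $Y_i$ to the linear model. This bridge relies on the linear-Gaussian modelling assumption together with the causal CDAG constraint that $X_i$'s only route to $Y_j$ runs through $Y_i$: under these the two partial correlations (of $X_i$ with $Y_j$, and of $Y_i$ with $Y_j$, given $\{X_j\} \cup (Y_k)_{k \in \pi_G}$) vanish together, so the regression-based Bayes factor faithfully detects the $X_i,Y_j$ (in)dependence. Once this is pinned down, the remainder is a routine computation showing that $g = n$ is the precise scaling at which the penalty $(n+1)^{-1/2}$ is large enough to eliminate spurious regressors yet small enough to permit detection of genuine ones.
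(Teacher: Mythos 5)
The part of your argument that re-derives the unit-information $g$-prior asymptotics is sound and is essentially what the paper's one-line proof outsources to the citation of Fern\'{a}ndez et al.\ (2001) together with Lemma \ref{character}: the closed form $B_{H,G} = (g+1)^{-1/2}(b_G/b_H)^{(n-2)/2}$, exponential growth of $(b_G/b_H)^{(n-2)/2}$ when the population partial regression coefficient of $Y_j$ on $Y_i$ (given $\{1,X_j\}\cup(Y_k)_{k\in\mathrm{pa}_G(j)}$) is nonzero, and stochastic boundedness of that factor, killed by the $(n+1)^{-1/2}$ penalty, when it is zero.

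The genuine gap is the ``bridge'' you flag and then assert: that the partial correlation of $X_i$ with $Y_j$ and the partial correlation of $Y_i$ with $Y_j$, both given $\{X_j\}\cup(Y_k)_{k\in\mathrm{pa}_G(j)}$, vanish together. This is false in general, and the direction that fails is exactly the one needed for condition 2 of Lemma \ref{character}. Take the true CDAG to contain the edge $v_j \rightarrow v_i$ but not $v_i \rightarrow v_j$, and take $\mathrm{pa}_G(j)=\emptyset$. Then $X_i \ci Y_j \,|\, X_j$ (this is the $c$-separation statement transferred by the partial Markov property), yet $Y_i$ and $Y_j$ remain partially correlated given $X_j$ because $Y_j$ is a direct cause of $Y_i$; hence your Bayes factor for adding the regressor $Y_i$ diverges instead of tending to zero, and the ``null'' case of your argument breaks down. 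The same problem arises when $Y_i$ and $Y_j$ share a primary ancestor outside the conditioning set. The underlying point is that $Y$--$Y$ partial correlations are symmetric in direction and cannot distinguish $v_i\rightarrow v_j$ from $v_j \rightarrow v_i$ --- this is precisely part (i) of the paper's identifiability theorem and is the very reason the instrument $X_i$ is introduced; so the equivalence your bridge asserts would, if true, contradict the paper's own non-identifiability result. Your parenthetical justification (``$X_i$'s only route to $Y_j$ runs through $Y_i$'') supports at most the implication from $X_i$-dependence to $Y_i$-dependence, not the converse required in the null case. Note that the paper's own proof is silent on this step as well (it invokes only the cited variable-selection consistency, which concerns the coefficient of $Y_i$, not the conditional independence involving $X_i$), so closing this gap requires an argument beyond both your proposal and the cited result, not merely a routine computation.
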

\begin{proof}
This result is an immediate consequence of Lemma \ref{character} and the well-known variable selection consistency property for the unit-information $g$-prior \citep[see e.g.][]{Fernandez}.
\end{proof}

\section{Results} \label{results}

\begin{table}
\resizebox{\textwidth}{!}{
\begin{tabular}{lccccccccc}
\hline
$\theta = 0$ & \multicolumn{3}{c}{$n=10$} & \multicolumn{3}{c}{$n=100$} & \multicolumn{3}{c}{$n=1000$} \\ \cmidrule(lr){2-4} \cmidrule(lr){5-7} \cmidrule(lr){8-10}
& DAG & DAG2 & CDAG & DAG & DAG2 & CDAG & DAG & DAG2 & CDAG \\
$p=5$ & 2.9 $\pm$ 0.48 & 2.6 $\pm$ 0.48 & 3.4 $\pm$ 0.56 & 3.2 $\pm$ 0.81 & 1.9 $\pm$ 0.43 & 0.8 $\pm$ 0.25 & 3 $\pm$ 0.76 & 1.6 $\pm$ 0.48 & 0.3 $\pm$ 0.21 \\
$p=10$ & 9.8 $\pm$ 0.55 & 9.2 $\pm$ 0.66 & 8.8 $\pm$ 0.81 & 8.2 $\pm$ 0.99 & 5 $\pm$ 0.84 & 2.8 $\pm$ 0.51 & 5.5 $\pm$ 1.2 & 5.4 $\pm$ 1.1 & 0.3 $\pm$ 0.15 \\
$p=15$ & 15 $\pm$ 1.3 & 14 $\pm$ 1.1 & 15 $\pm$ 1.2 & 11 $\pm$ 1 & 6.8 $\pm$ 0.83 & 4.4 $\pm$ 0.58 & 6.3 $\pm$ 1.5 & 8.2 $\pm$ 0.92 & 0.8 $\pm$ 0.25 \\
\hline \\[-2ex]
\hline
$\theta = 0.5$ & \multicolumn{3}{c}{$n=10$} & \multicolumn{3}{c}{$n=100$} & \multicolumn{3}{c}{$n=1000$} \\ \cmidrule(lr){2-4} \cmidrule(lr){5-7} \cmidrule(lr){8-10}
& DAG & DAG2 & CDAG & DAG & DAG2 & CDAG & DAG & DAG2 & CDAG \\
$p=5$ & 5.7 $\pm$ 0.56 & 4.5 $\pm$ 0.48 & 4 $\pm$ 0.49 & 3.9 $\pm$ 0.75 & 2.1 $\pm$ 0.62 & 0.6 $\pm$ 0.31 & 3.8 $\pm$ 0.74 & 1.8 $\pm$ 0.81 & 0 $\pm$ 0 \\
$p=10$ & 9.8 $\pm$ 0.96 & 8.1 $\pm$ 0.95 & 7.8 $\pm$ 1.1 & 7.2 $\pm$ 1.7 & 4.6 $\pm$ 1.2 & 1.7 $\pm$ 0.84 & 7.9 $\pm$ 1.3 & 3 $\pm$ 0.52 & 0.6 $\pm$ 0.31 \\
$p=15$ & 14 $\pm$ 0.85 & 13 $\pm$ 0.86 & 13 $\pm$ 0.7 & 14 $\pm$ 1.1 & 6.2 $\pm$ 0.55 & 3.6 $\pm$ 0.76 & 11 $\pm$ 0.93 & 7.5 $\pm$ 1.7 & 1 $\pm$ 0.45 \\
\hline \\[-2ex]
\hline
$\theta = 0.99$ & \multicolumn{3}{c}{$n=10$} & \multicolumn{3}{c}{$n=100$} & \multicolumn{3}{c}{$n=1000$} \\ \cmidrule(lr){2-4} \cmidrule(lr){5-7} \cmidrule(lr){8-10}
& DAG & DAG2 & CDAG & DAG & DAG2 & CDAG & DAG & DAG2 & CDAG \\
$p=5$ & 5.1 $\pm$ 0.72 & 4 $\pm$ 0.56 & 4.2 $\pm$ 0.49 & 4.7 $\pm$ 0.79 & 2.1 $\pm$ 0.5 & 0.5 $\pm$ 0.31 & 2.7 $\pm$ 0.42 & 1.7 $\pm$ 0.56 & 0.9 $\pm$ 0.48 \\
$p=10$ & 9.1 $\pm$ 0.84 & 8.1 $\pm$ 0.72 & 7.8 $\pm$ 1.3 & 9.7 $\pm$ 1.1 & 3.7 $\pm$ 0.52 & 2.5 $\pm$ 0.45 & 9.4 $\pm$ 1 & 4.4 $\pm$ 0.64 & 0.3 $\pm$ 0.3 \\
$p=15$ & 15 $\pm$ 1.1 & 13 $\pm$ 0.94 & 13 $\pm$ 1.2 & 13 $\pm$ 1.4 & 6.6 $\pm$ 0.91 & 4.7 $\pm$ 0.86 & 17 $\pm$ 1.8 & 9.1 $\pm$ 0.69 & 0.7 $\pm$ 0.4 \\
\hline\hline
\end{tabular}
}
\caption{Simulated data results. 
Here we display the mean structural Hamming distance from the estimated to the true graph structure, computed over 10 independent realisations, along with corresponding standard errors. 
[Data were generated using linear-Gaussian structural equations. $\theta \in [0,1]$ captures the amount of dependence between the secondary variables $(X_i)_{i \in W}$, $n$ is the number of data points and $p$ is the number of primary variables $(Y_i)_{i \in V}$. ``DAG'' = estimation based only on primary variables $(Y_i)_{i \in V}$, ``DAG2'' = estimation based on the full data $(X_i)_{i \in W} \cup (Y_i)_{i \in V}$, ``CDAG'' = estimation based on the full data and enforcing CDAG structure.]
}
\label{tab res 2}
\end{table}

\subsection{Simulated data} \label{sim data}
We simulated data from linear-Gaussian structural equation models (SEMs). Here we summarise the simulation procedure, with full details provided in the supplement.
We first sampled a DAG $G$ for the primary variables and a second DAG $G'$ for the secondary variables (independently of $G$), following a sampling procedure described in the supplement.  That is, $G$ is the causal structure of interest, while $G'$ governs dependence between the secondary variables.
Data for the  secondary variables $(X_i)_{i \in W}$ were  generated from an SEM with structure $G'$.
The strength of dependence between secondary variables was controlled by a parameter $\theta \in [0,1]$.
Here $\theta=0$ renders the secondary variables independent and $\theta=1$ corresponds to a deterministic relationship between secondary variables, with intermediate values of $\theta$ giving different degrees of covariation among the secondary variables.
Finally, conditional on the $(X_i)_{i \in W}$, we simulated data for the primary variables $(Y_i)_{i \in V}$ from an SEM with structure $G$.
To manage computational burden, for all estimators we considered only  models of size $|\pi| \leq 5$.
Performance was quantified by the structural Hamming distance (SHD) between the estimated DAG $\hat{G}_S$ and true, data-generating DAG $G$; we report the mean SHD as computed over 10 independent realisations of the data.

\begin{figure}[t!]
\centering
\includegraphics[width = 0.6\textwidth,clip,trim = 3.5cm 8.75cm 4.25cm 9cm]{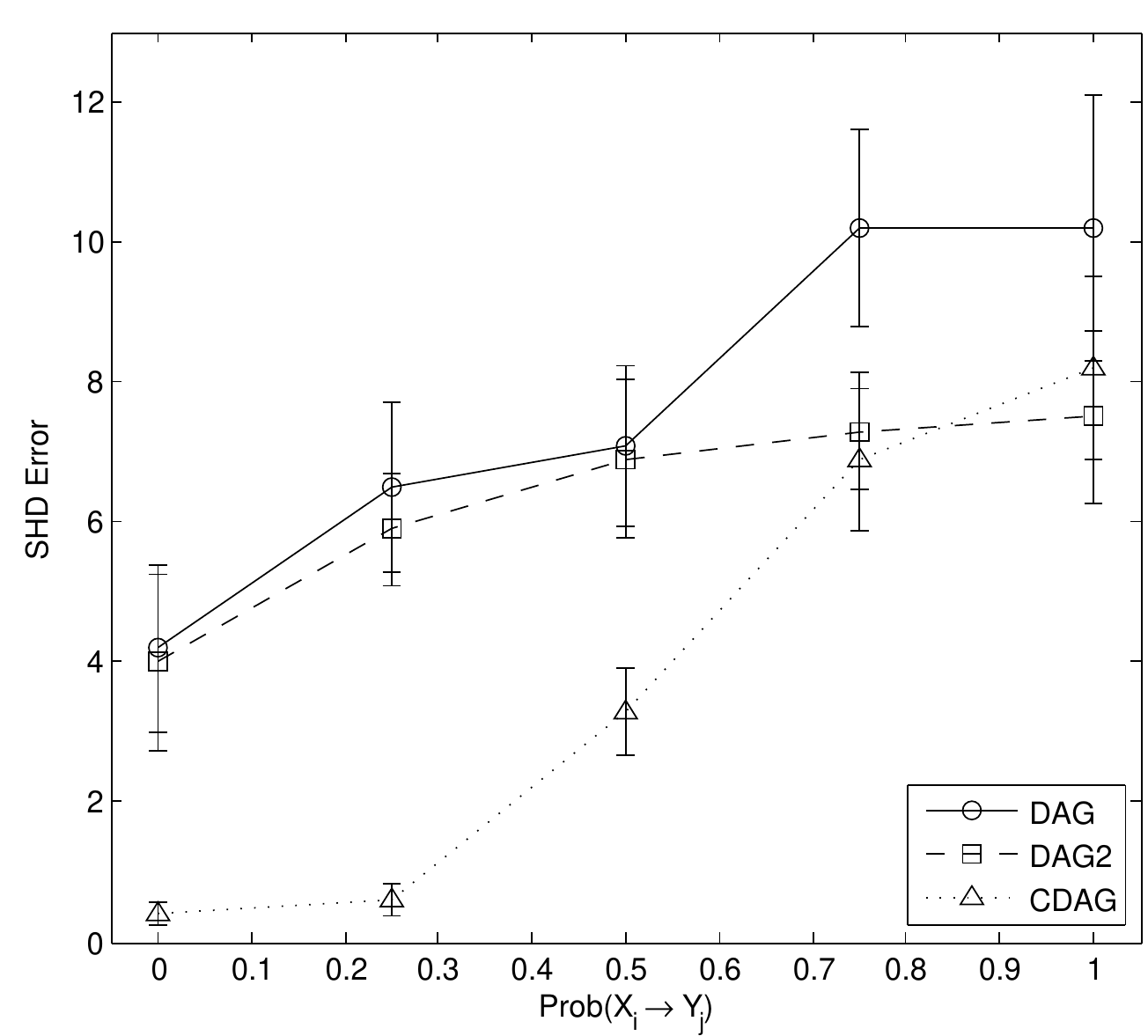}
\caption{Simulated data results; model misspecification.
[Data were generated using linear-Gaussian structural equations. Here we fixed $\theta = 0$, $p = 15$, $n = 1000$ and considered varying the number $E$ of misspecified edges as described in the main text.
On the $x$-axis we display the marginal probability that any given edge $Y_i \rightarrow Y_j$ has an associated misspecified edge $X_i \rightarrow Y_j$, so that when $\text{Prob}(X_i \rightarrow Y_j) = 1$ the number $E$ of misspecified edges is equal to the number of edges in $G$.
``DAG'' = estimation based only on primary variables $(Y_i)_{i \in V}$, ``DAG2'' = estimation based on the full data $(X_i)_{i \in W} \cup (Y_i)_{i \in V}$, ``CDAG'' = CDAG estimation based on the full data $(X_i)_{i \in W} \cup (Y_i)_{i \in V}$.]}
\label{misspec}
\end{figure}

In Table \ref{tab res 2} we compare the proposed score-based CDAG estimator with the corresponding score-based DAG estimator that uses only the primary variable data $(Y_i^l)_{i = 1,\dots,p}^{l=1,\dots,n}$.
We also considered an alternative (DAG2) where a standard DAG estimator is applied to {\it all} of the variables $(X_i)_{i \in W}$, $(Y_i)_{i \in V}$, with the subgraph induced on the primary variables giving the estimate for $G$.
We considered values $\theta = 0$, $0.5$, $0.99$ corresponding to zero, mild and strong covariation among the secondary variables.
In each data-generating regime we found that CDAGs were either competitive with, or (more typically) more effective than, the DAG and DAG2 estimators.
In the $p=15$, $n=1000$ regime (that is closest to the proteomic application that we present below) the CDAG estimator dramatically outperforms these two alternatives.
Inference using DAG2 and CDAG (that both use primary as well as secondary variables) is robust to large values of $\theta$, whereas in the $p=15$, $n=1000$ regime, the performance of the DAG estimator based on only the primary variables deteriorates for large $\theta$.
This agrees with intuition since association between the secondary $X_i$'s may induce correlations between the primary $Y_i$'s. 
CDAGs showed superior efficiency at large $n$ compared with DAG2.
This reflects the unconstrained nature of the DAG2 estimator that must explore a wider class of conditional independence models, resulting in a loss of precision relative to CDAGs that was evident in Table \ref{tab res 2}.

To better understand the limitations of CDAGs we considered a data-generating regime that violated the CDAG assumptions. 
We focused on the $\theta = 0$, $p=15$, $n=1000$ regime where the CDAG estimator performs well when data are generated ``from the model''.
We then introduced a number $E$ of edges of the form $X_i \rightarrow Y_j$ where $Y_i \rightarrow Y_j \in G$. These edges (strongly) violate the structural assumptions implied by the CDAG model because their presence means that $X_i$ is no longer a suitable instrument for $Y_i \rightarrow Y_j$ as it is no longer conditionally independent of the variable $Y_j$ given $Y_i$.
We assessed performance of the CDAG, DAG and DAG2 estimators as the number $E$ of such misspecified edges is increased (Fig. \ref{misspec}) We find that whilst CDAG continues to perform well up to a moderate fraction of misspecified edges,  for larger fractions performance degrades and eventually coincides with DAG and DAG2.

\subsection{TCGA patient data} \label{real data}

In this section we illustrate  the use of CDAGs in an analysis of proteomic data from cancer samples. 
We focus on causal links between post-translationally modified proteins involved in a process called cell signalling. The estimation of causal signalling networks has been a prominent topic in computational biology for some years \citep[see, among others,][]{Sachs,Nelander,Hill,Oates}. Aberrations to causal signalling networks are central to cancer biology \citep{Weinberg}.

In this application, the primary variables $(Y_i)_{i \in V}$ represent abundance of phosphorylated protein (p-protein) while the secondary variables $(X_i)_{i \in W}$ represent abundance of corresponding total proteins (t-protein). A  t-protein can be modified by a process called phosphorylation to form the corresponding p-protein and the p-proteins play a key role in signalling. 
An edge $v_i \rightarrow v_j$ has the biochemical interpretation that the phosphorylated form of protein $i$ acts as a causal influence on phosphorylation of protein $j$. 
The data we analyse are from the TCGA ``pan-cancer" project \citep{Akbani} and comprise measurements of protein levels (including both t- and p-proteins) using a technology called reverse phase protein arrays (RPPAs). We focus on $p=24$ proteins for which (total, phosphorylated) pairs are available; the data span eight different cancer types \citep[as defined in][]{Stadler} with a total sample size of $n=3,467$ patients.
 We first illustrate the key idea of using secondary variables to inform causal inference regarding primary variables with an  example from the TCGA data:

\begin{figure}[t!]
\centering
\begin{subfigure}[c]{0.45\textwidth}
\centering
\vspace{20pt}
\begin{tabular}{|l|c|c|} \hline
protein & node & variable \\ \hline
CHK1 & $w_i$ & $X_i$ \\
p-CHK1 & $v_i$ & $Y_i$ \\
CHK2 & $w_j$ & $X_j$ \\
p-CHK2 & $v_j$ & $Y_j$ \\ \hline
\end{tabular}
\vspace{5pt}
\caption{}
\label{key}
\end{subfigure} 
\begin{subfigure}[]{0.45\textwidth}
\centering
\includegraphics[width = 0.5\textwidth,clip,trim = 2.2cm 24cm 15cm 2.5cm]{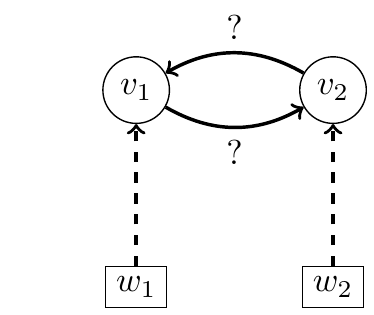}
\caption{}
\label{IVs}
\end{subfigure} 
\begin{subfigure}[c]{\textwidth}
\includegraphics[width = 1.03\textwidth,clip,trim = 0cm 5.5cm 0cm 0cm]{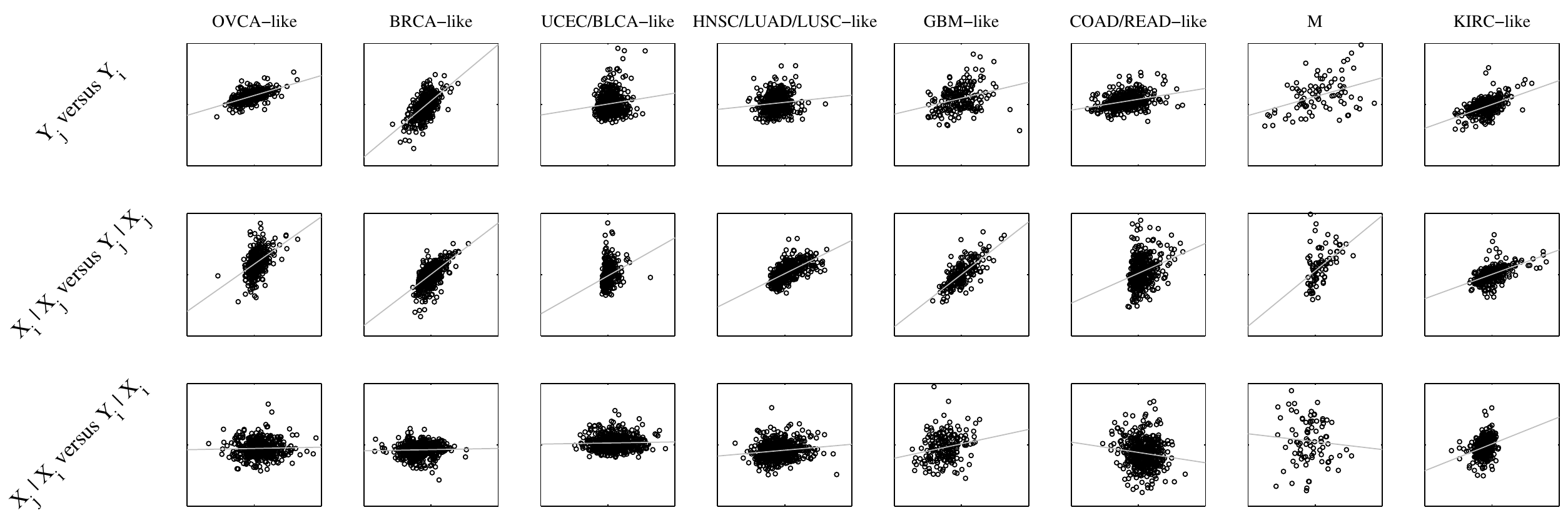}
\caption{}
\label{example}
\end{subfigure} 
\caption{CHK1 total protein (t-CHK1) as a natural experiment for phosphorylation of CHK2 (p-CHK2).
(a) Description of the variables.
(b) A portion of the CDAG relating to these variables. It is desired to estimate whether there is a causal relationship $Y_i \rightarrow Y_j$ (possibly mediated by other protein species) or {\it vice versa}.
(c) Top row: Plotting phosphorylated CHK1 (p-CHK1; $Y_i$) against p-CHK2 ($Y_j$) we observe weak correlation in some of the cancer subtypes. 
Middle row: We plot the residuals when t-CHK1 is regressed on total CHK2 (t-CHK2; x-axis) against the residuals when p-CHK2 is regressed on t-CHK2 (y-axis). The plots show a strong (partial) correlation in each subtype that suggests a causal effect in the direction p-CHK1 $\rightarrow$ p-CHK2.
Bottom row: Reproducing the above but with the roles of CHK1 and CHK2 reversed, we see much reduced and in many cases negligible partial correlation, suggesting lack of a causal effect in the reverse direction, i.e. p-CHK1 $\nleftarrow$ p-CHK2.
[The grey line in each panel is a least-squares linear regression.]
}
\end{figure}

\begin{example}[CHK1 t-protein as a natural experiment for CHK2 phosphorylation] \label{example1}
Consider RVs $(Y_i,Y_j)$ corresponding respectively to p-CHK1 and p-CHK2, the phosphorylated forms of CHK1 and CHK2 proteins.
Fig. \ref{example} (top row) shows that these variables are weakly correlated in most of the 8 cancer subtypes.
There is a strong partial correlation between t-CHK1 ($X_i$) and p-CHK2 ($Y_j$) in each of the  subtypes when conditioning on t-CHK2 ($X_j$) (middle row), but there is essentially no partial correlation between t-CHK2 ($X_j$) and p-CHK1 ($Y_i$) in the same subtypes when conditioning on  t-CHK1 (bottom row).
{\it Thus, under the CDAG assumptions, this suggests that there exists a directed causal path from p-CHK1 to p-CHK2, but not vice versa.}
\end{example}

\begin{figure}[t!]
\includegraphics[width = \textwidth]{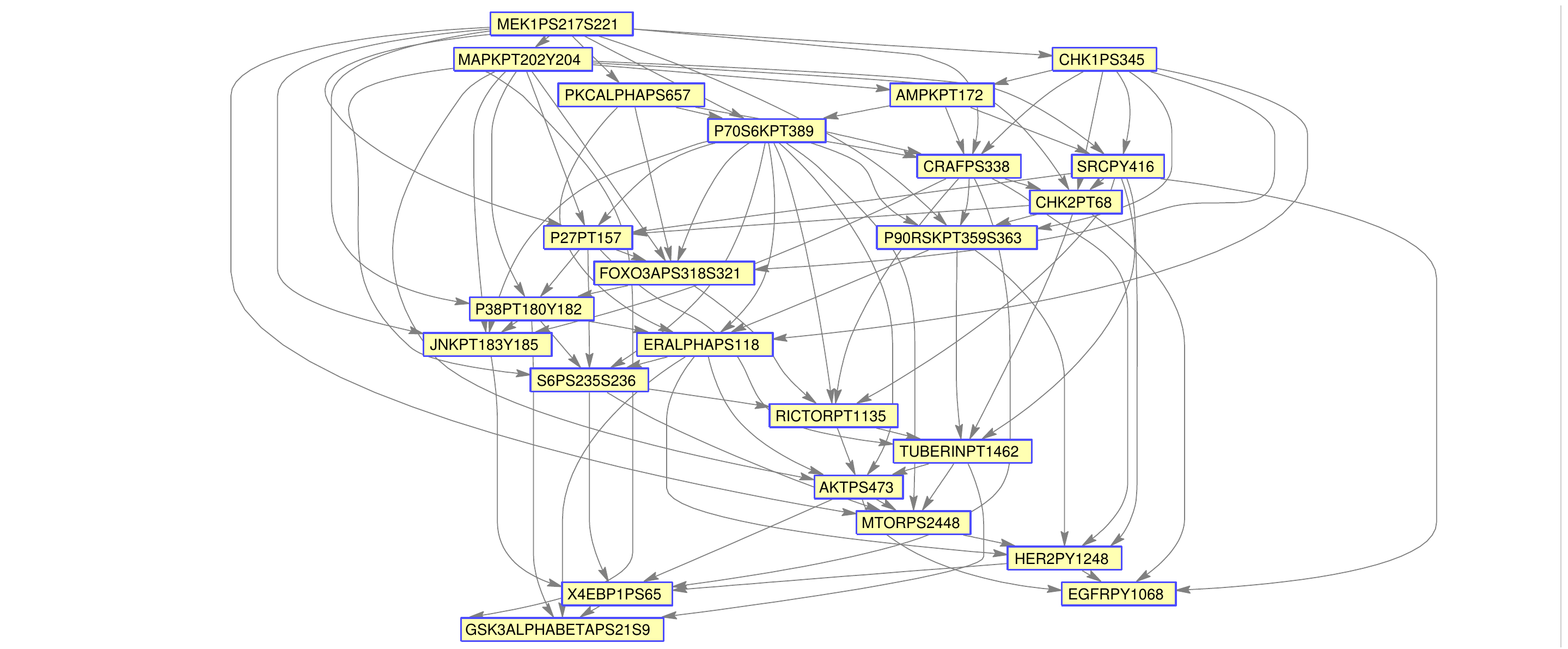}
\caption{{\it Maximum a posteriori} conditional DAG, estimated from proteomic  data derived from cancer patient samples 
(from the Cancer Genome Atlas pan-cancer study, samples belonging to the BRCA-like group as defined by \citet{Stadler}).
Here vertices represent phosphorylated proteins (primary variables) and edges have the biochemical  interpretation that the parent protein plays a causal role in phosphorylation of the child protein.}
\label{brca cdag}
\end{figure}

Example \ref{example1} provides an example from the TCGA data where controlling for a secondary variable (here, t-protein abundance) may be important for causal inference concerning primary variables (p-protein abundance).

We now apply the CDAG methodology to all $p=24$ primary variables that we consider, using data from the largest subtype in the study, namely BRCA-like (see \citet{Stadler} for details concerning the data and subtypes). The estimated graph is shown in Fig. \ref{brca cdag}. 
We note that assessment of the biological validity of this causal graph is a nontrivial matter, and outside the scope of the present paper.
However, we observe that several well known edges, such as from p-MEK to p-MAPK, appear in the estimated graph and are oriented in the expected direction. Interestingly, in several of these cases, the edge orientation is different when a standard DAG estimator is applied to the same data, showing that the CDAG formulation can reverse edge orientation with respect to a classical DAG (see supplement). 
We note also that the CDAG is  denser, with more edges, than the  DAG (Fig. \ref{density}), demonstrating that in many cases, accounting for secondary variables can render candidate edges more salient.
These differences support our theoretical results insofar as they demonstrate that  in practice CDAG estimation can give quite different results from a DAG analysis of the same primary variables but we note that proper assessment of estimated causal structure in this setting requires further biological work that is beyond the scope of this paper.

\section{Conclusions}

Practitioners of causal inference understand that it is important to distinguish between variables that could reasonably be considered as potential causes and those that cannot.
In this work we put forward CDAGs as a simple class of graphical models
that make this distinction explicit.
Motivated by molecular biological applications, we developed CDAGs that use bijections between primary and secondary index sets. However, the general approach presented here could be extended to other multivariate settings where variables are in some sense non-exchangeable.
Naturally many of the philosophical considerations 
and practical limitations and caveats
of  classical DAGs remain relevant for CDAGs and we refer the reader to \cite{Dawid} for an illuminating discussion of these issues.

The application to proteomic  data presented above represents a principled  approach to integrate total and phosphorylated protein data for causal inference. Our results suggest that in some settings it may be important to account for total protein levels in analysing protein phosphorylation and CDAGs allow such integration in a causal framework.
Theoretical and empirical results showed  that CDAGs can  improve estimation of causal structure relative to classical DAGs when the CDAG assumptions are even approximately satisfied.

\begin{figure}[t!]
\centering
\includegraphics[width = 0.6\textwidth]{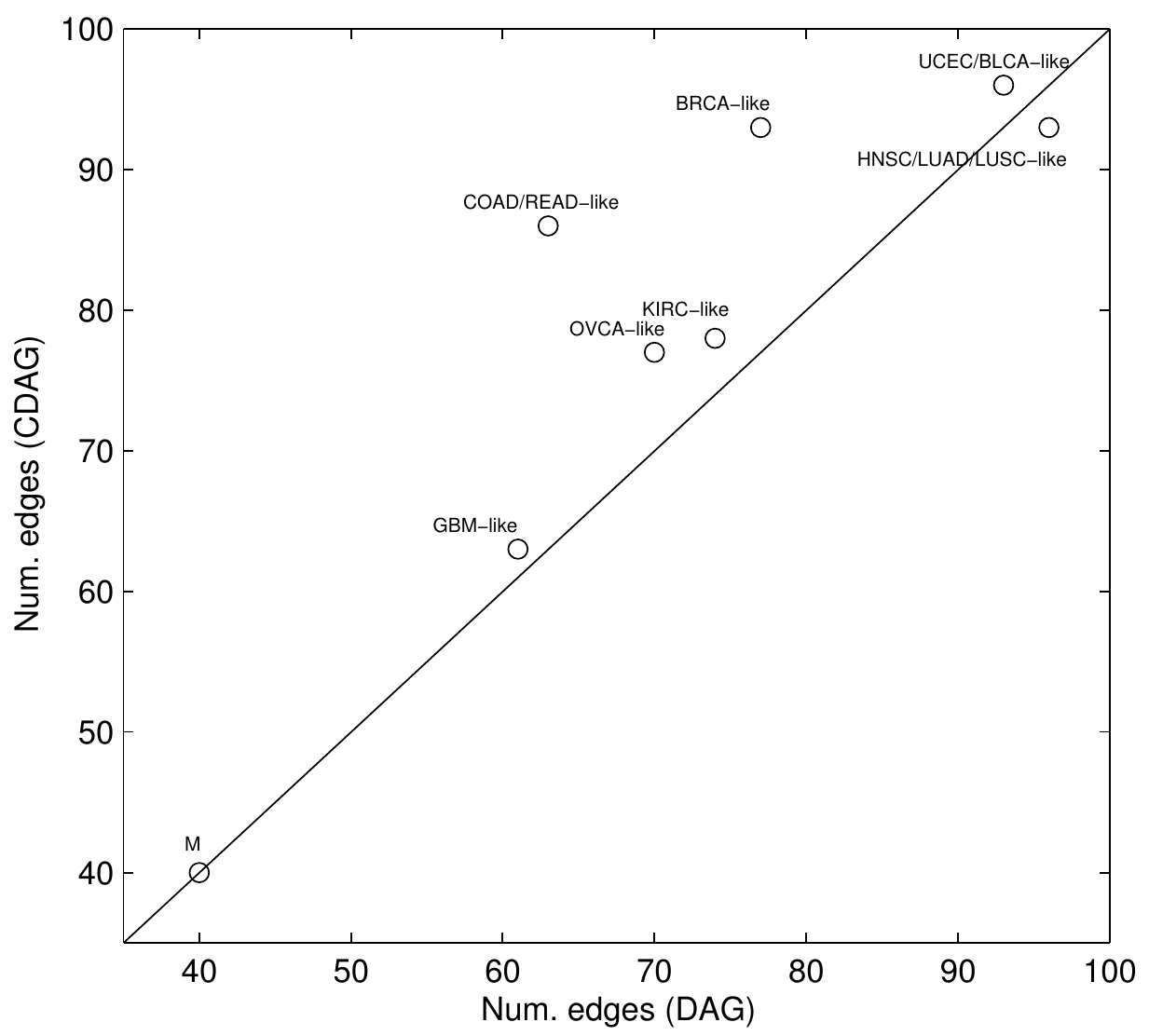}
\caption{Cancer patient data; relative density of estimated protein networks. 
Here we plot the number of edges in the networks inferred by estimators based on DAGs ($x$-axis) and based on CDAGs ($y$-axis).
[Each point corresponds to a cancer subtype that was previously defined by St\''{a}dler {\it et al}, 2014.
``DAG'' = estimation based only on primary variables $(Y_i)_{i \in V}$, ``CDAG'' = estimation based on the full data and enforcing CDAG structure.]}
\label{density}
\end{figure}

We briefly mention three natural extensions of the present work:
(i) The CDAGs put forward here allow exactly one secondary variable $X_i$ for each primary variable $Y_i$.
In many settings this may be overly restrictive. To return to  the protein example, it may be useful to consider multiple phosphorylation sites for a single total protein, i.e. multiple $Y_i$ for a single $X_i$; this would be a natural extension of the CDAG methodology. Examples of this more general formulation were recently discussed by \cite{Neto} in the context of eQTL data.
Conversely we could extend the recent ideas of \cite{Kang} by allowing for multiple secondary variables for each primary variable, not all of which may be valid as instruments.
(ii) In many applications data may be available from multiple related but causally non-identical groups, for example  disease types.
It could then be useful to consider {\it joint} estimation of multiple CDAGs, following recent work on estimation for multiple DAGs  \citep{Oates7,Oates8}.
(iii) Advances in assay-based technologies now mean that the number $p$ of variables is frequently very large. Estimation for high-dimensional CDAGs may be possible using  recent results for high-dimensional DAGs  \citep[e.g.][and others]{Kalisch,Loh}.

\subsubsection*{Acknowledgments}
This work was inspired by discussions with Jonas Peters at the Workshop \emph{Statistics for Complex Networks}, Eindhoven 2013, and Vanessa Didelez at the \emph{UK Causal Inference Meeting}, Cambridge 2014.
CJO was supported by the Centre for Research in Statistical Methodology (CRiSM) EPSRC EP/D002060/1. SM acknowledges the support of the UK Medical Research Council and is a recipient of a Royal Society Wolfson Research Merit Award.

\end{document}